\documentclass[reqno,11pt]{article}
\pdfoutput=1
\usepackage{jheppub}
\usepackage[T1]{fontenc}
\usepackage[latin9]{inputenc}
\usepackage{amsmath}
\usepackage{amsthm}
\usepackage{amssymb}
\usepackage{cancel}
\usepackage{stmaryrd}
\usepackage[dvipsnames]{xcolor}
\usepackage[normalem]{ulem}
\usepackage[tikz]{bclogo}
\usetikzlibrary{shapes, shadows, arrows, patterns, shadings} 
\usetikzlibrary{positioning}
\tikzset{mynode/.style={align=center,text width=3cm}}
\usepackage{pgf}
\usepackage{hyperref}
\usepackage{slashed}
\usepackage{shuffle}
\usepackage{babel}
\usepackage{graphicx}
\usepackage{dcolumn}
\usepackage{bm}
\usepackage{bbm}
\usepackage[scr,scaled=1.0]{rsfso}

\newcommand{\CC}{\ensuremath{\mathbb{C}}}
\newcommand{\ZZ}{\ensuremath{\mathbb{Z}}}

\DeclareMathOperator{\Sol}{Sol}
\DeclareMathOperator{\rank}{rank}

\newcommand{\dd}{\mathrm{d}}

\newtheorem{theorem}{Theorem}[section]

\newtheorem{lemma}[theorem]{Lemma}

\title{Canonical differential equations for Feynman integrals from 
$\mathcal{A}$-hypergeometric systems in the Schwinger representation}

\author{Mateo Jimenez-Santacruz, Cristhiam Lopez-Arcos, Alexander Quintero V\'{e}lez}
\affiliation{Departamento de Matem\'{a}ticas, Universidad Nacional de Colombia Sede Medell\'{i}n,\\
Carrera 65 $\#$ 59A--110, Medell\'{i}n, Colombia}%
\emailAdd{mjimenezsa@unal.edu.co, cmlopeza@unal.edu.co, aquinte2@unal.edu.co}

\date{\today}

\abstract{We follow a $D$-module approach to the construction of canonical differential equations for Feynman integrals directly from their Gel'fand-Kapranov-Zelevinsky (GKZ) hypergeometric systems. Starting from the Schwinger representation, we identify the associated generalized Euler integral and its GKZ system, as an alternative to the Lee-Pomeransky representation. The generalized Euler formulation allows reductions associated with facets of the Newton polytope to be identified directly, reducing the number of variables in the resulting differential systems. We construct the associated Pfaffian systems using Frobenius bases, which also provide direct access to the singular loci. After a suitable rationalizing change of variables, the systems are brought into canonical $\epsilon$-form. A key feature of the construction is that the dimension of the canonical basis is determined by the holonomic rank of the GKZ system and, for the examples considered, it is smaller than or equal to the number of master integrals obtained from integration-by-parts, thus providing a reduced description of the differential system.}

\begin{document}

\maketitle
\tableofcontents{}

\section{Introduction}
In perturbative quantum field theory, Feynman integrals play a crucial role: they constitute the building blocks of scattering amplitudes and correlation functions allowing high-precision theoretical predictions for experiments at the Large Hadron Collider, and more recently have found a place in the study of gravitational waves and cosmology \cite{Kosower:2022yvp,Buonanno:2022pgc,Driesse:2026qiz,Arkani-Hamed:2023kig,De:2023xue}. They are multidimensional integrals over loop momenta that depend on the external kinematical data of the process and the dimension of the spacetime. Despite their compact definition, their evaluation is often highly nontrivial, particularly at higher loop orders, due to the singularities and complex analytic structures that arise. Consequently, the analytic or numerical computation of these integrals has been the subject of extensive research and remains a central endeavor in modern theoretical and particle physics. 

Feynman integrals have also served as a bridge between theoretical physics and pure mathematics. Initially, as formal devices for organizing perturbative quantum field theory calculations, they have revealed themselves over the past two decades to be a rich source of genuinely new mathematics. Their singularity structure is governed by algebraic geometry through Landau varieties \cite{Mizera:2021icv,Fevola:2023kaw}, Newton polytopes and GKZ hypergeometric systems \cite{Klausen:2019hrg,delaCruz:2019skx}, while their transcendental content is related to the theory of periods, mixed Hodge structures and motives \cite{Bogner:2007mn,Bloch:2008jk,Vanhove:2014wqa,Marcolli:2009zy}. Other key developments concern analytical tools to solve their systems of differential equations \cite{Bitoun:2017nre,Agostini:2022cgv,Lairez:2022zkj,delaCruz:2024vwf,Rodriguez:2025bny}.

Among the algebraic and analytic methods that have been developed, one of the most widely used approaches for evaluating Feynman integrals is a combination of Integration by Parts (IBP) identities \cite{Tkachov:1981wb,Chetyrkin:1981qh,Laporta:2000dsw} and differential equations for master integrals \cite{Henn:2013pwa}. These two methods have made many high-precision computations possible, but persistent challenges remain: the function spaces needed to describe their solution can be difficult to identify and organize \cite{Britto:2026qjn} and the number of master integrals, and hence the number of differential equations, grows rapidly with the number of loops and legs of the diagrams \cite{Laporta:2000dsw}. A representative example of this increasing complexity is provided by the QED corrections to the electron anomalous magnetic moment. After integration-by-parts reduction, the one-, two-, three-, and four-loop contributions involve 1, 3, 17, and 334 master integrals, respectively \cite{Laporta:1996mq,LAPORTA2017232}.

The mentioned challenges have motivated the development of complementary mathematical frameworks aimed at uncovering the algebraic, analytical and geometric structures underlying Feynman integrals. Among them, the theory of $A$-hypergeometric systems, introduced by Gel'fand, Kapranov, and Zelevinsky (GKZ) \cite{GelfandGraevZelevinsky1987,GelfandKapranovZelevinsky1989,GelfandKapranovZelevinsky1990}, provides a natural setting for studying their parametric representations. In particular, the Lee-Pomeransky (LP) representation, based on the polynomial $G=\mathcal{U}+\mathcal{F}$, where $\mathcal{U}$ and $\mathcal{F}$ are the first and second Symanzik polynomials, respectively, can be embedded into a GKZ system by promoting the coefficients of $G$ to independent variables \cite{delaCruz:2019skx,Klausen:2019hrg}. This framework allows one to compute series solutions for the Feynman integral using the Saito-Sturmfels-Takayama canonical series algorithm \cite{SaitoSturmfelsTakayama2000} or regular unimodular triangulations associated with the GKZ data. In many examples, this construction can be formulated as a tractable problem in computational algebra.
    
Nevertheless, obtaining closed-form representations of these series in terms of known hypergeometric functions can be challenging. Even when such representations are available, computing their $\epsilon$-expansion is often difficult, particularly in the multivariate case; see, for example,  \cite{Moch:2001zr,Kalmykov:2008ge}. In addition, the holonomic rank of the generic GKZ system may exceed the number of master integrals obtained after IBP reduction. Although recent methods exploit resonance to reduce GKZ systems to smaller ones, systematically restricting the enlarged coefficient space to the physical locus remains a significant challenge \cite{Vanhove:2018mto,delaCruz:2019skx,Britto:2026qjn}.

Another important mathematical incursion comes from the application of $D$-modules techniques as a complementary algebraic framework for the calculation of Feynman integrals \cite{Henn:2023tbo,Chestnov:2025whi}. This point of view focuses on the questions related to differential equations, singular loci, and the number of independent solutions to be analyzed in terms of the algebraic and geometric properties of $D$-ideals and Pfaffian systems.  

In this work we also adopt a GKZ/$D$-modules perspective. Instead of constructing the system directly from the LP polynomial $G$, we begin with the Schwinger-parametric representation. We introduce a change of variables which yields a generalized Euler integral in which the two transformed Symanzik polynomials remain as separate polynomial blocks. By promoting the coefficients of their monomials to independent variables, the resulting integral can be embedded into a GKZ system associated with a Cayley-type configuration. The two-block structure preserves the distinct homogeneity properties and coefficient sets of the transformed Symanzik polynomials. This structure, in some cases, enabled us to identify admissible facet reductions of the Newton polytope directly, reducing the number of toric variables of the GKZ system. 

For each reduced system, instead of looking for the $\Gamma$-series solutions, we use the holonomic rank to determine the dimension of the corresponding Frobenius basis and their corresponding Pfaffian systems, which made the singular loci of the integrals explicit. Then, after suitable rationalizing changes of variables and basis transformations, these systems can be brought into canonical $\epsilon$-form \cite{Henn:2013pwa} and be solved as iterated Chen integrals, once appropriate boundary conditions are specified. We illustrated this procedure for the two-mass bubble, the on-shell massless box, the off-shell massless triangle and box, and the sunset integral with one massless propagator. One important remark, is that the holonomic rank of the resulting systems was checked against master-integral counts obtained with \textsc{LiteRed}. In all examples considered, the holonomic rank is less than or equal to the number of master integrals obtained through a conventional IBP reduction.

This work is organized as follows. In Section \ref{sec:back} we introduce the formal background material supporting our approach, namely $D$-modules, GKZ systems, and their connections. Section \ref{sec:newApproach} is where we present the connection between GKZ systems and Feynman integrals in the Schwinger representation. We present a series of examples of application in Section \ref{sec:examples}. We conclude in Section \ref{sec:concl}. Appendix \ref{app:A} is devoted to the proof of the \emph{Pfaffian reduction lemma} that is originally stated at the end of Section \ref{sec:2.2}. In Appendix \ref{app:bund-cond} we analyze the threshold boundary conditions for the bubble and on-shell box examples. In Appendix \ref{app:explicitGM} we show the explicit expressions for the Pfaffian matrices for the massless triangle/box and sunset examples.

\section{Background material}\label{sec:back}
This section collects the material needed in the remainder of the paper. We briefly review the basic notions of $D$-ideals and Pfaffian systems, introduce GKZ hypergeometric systems, and discuss their generalized Euler integral representations, while fixing the notation and conventions adopted throughout.

\subsection{$D$-ideals and Pfaffian systems}
We begin with $D$-ideals and the associated Pfaffian systems. A detailed treatment is available in \cite{SaitoSturmfelsTakayama2000} and \cite{SattelbergerSturmfels2025}. For an application in the context of Feynman integrals, we refer to e.g. \cite{Henn:2023tbo,Chestnov:2025whi}.

Let $\CC(x_1,\dots,x_n)$ denote the field of rational functions in the variables $x_1,\dots,x_n$. Given a rational function $a(x)=a(x_1,\dots,x_n)$, we define the multiplication of the differential operators $\partial_i$ and $a(x)$ by
\begin{equation}
\partial_i a(x) = a(x) \partial_i + \frac{\partial a(x)}{\partial x_i}. 
\end{equation}
We also require that $\partial_i$ and $\partial_j$ commute. These relations, together with the associative and distributive laws, define the ring of differential operators with rational function coefficients,
\begin{equation}
R_n = \CC(x_1,\dots,x_n) \langle \partial_1, \dots, \partial_n \rangle. 
\end{equation}
Every element $P \in R_n$ can be expressed in the form 
\begin{equation}
P = \sum_{\alpha \in \ZZ_{\geq 0}^{n}} a_{\alpha}(x) \partial^{\alpha},
\end{equation}
where only finitely many coefficients $a_{\alpha}(x) \in \CC(x_1,\dots,x_n)$ are nonzero. Here, $\partial^{\alpha} = \partial_1^{\alpha_1} \cdots \partial_n^{\alpha_n}$ is the standard multi-index notation. To distinguish the action of $R_n$ on rational functions from multiplication in $R_n$, it is customary to denote the former by $\bullet$, so that
\begin{equation}
\partial^{\alpha} \bullet f = \frac{\partial^{\alpha_1 + \cdots + \alpha_n} f}{\partial x_1^{\alpha_1} \cdots \partial x_n^{\alpha_n} }. 
\end{equation}
One readily verifies that $(PQ) \bullet f = P \bullet (Q \bullet f)$ holds for all $P,Q \in R_n$. 

The ring $R_n$ provides an algebraic framework for the study of systems of linear partial differential equations. These systems are naturally encoded by left ideals of $R_n$, which are simply called $D$-\emph{ideals}. Indeed, given one such $D$-ideal $I \subseteq R_n$, Bernstein's theorem \cite{Bernstein1971} implies that $I$ admits a finite set of generators $P_1, \dots,P_r \in R_n$. The corresponding system of linear partial differential equations for the unknown function $f$ is then given by
\begin{equation}
P_i \bullet f = 0, \quad i=1,\dots,r.
\end{equation}
We denote its space of solutions by $\Sol(I)$. This space turns out to be completely determined by the quotient $R_n/I$ regarded as a left $R_n$-module. In fact, if the solutions are sought in some left $R_n$-module $M$ of functions, then $\Sol(I)$ is in canonical bijection with the set of $R_n$-module homomorphisms from $R_n/I$ to $M$. For this reason, one usually shifts the focus from the solution space to the quotient $R_n/I$, which is simply referred to as the $D$-\emph{module} associated with $I$. 

Another way to interpret the condition of finite holonomic rank is through the possibility of rewriting the system of partial differential equations encoded by $I$ as a finite-dimensional system of first-order differential equations. Let $r = \rank(I)$ and choose a basis $\{s_1,s_2,\dots,s_r \}$ of $R_n/I$ as a vector space over $\CC(x_1,\dots,x_n)$ such that $s_1 = 1$. We further assume that this basis is represented by monomials in the differential operators $\partial_i$. In practice, such a basis can be obtained from a Gr\"{o}bner basis with respect to a fixed term order, but we shall not need this construction here and simply assume that such a basis has been chosen. Since $R_n/I$ is a left $R_n$-module, the action of the differential operators $\partial_i$ on the chosen basis can be expressed, for $i = 1,\dots,n$ and $j = 1,\dots,r$, as 
\begin{equation}
\partial_i s_j =  \sum_{k = 1}^{r} p_{ij}^{k} s_k, 
\end{equation}
for some coefficients $p_{ij}^{k} \in \CC(x_1,\dots,x_n)$. In other words, the action of each differential operator $\partial_i$ on $R_n/I$ is represented by $r \times r$ matrix $\Omega_i$ with entries in $\CC(x_1,\dots,x_n)$, whose $(k,j)$-th entry is given by $p_{ij}^{k} $. We next consider a solution $f \in \Sol(I)$ and associate to it the vector 
\begin{equation}\label{eq:2.7}
F = \begin{pmatrix} s_1 \bullet f \\ s_2 \bullet f \\ \vdots \\ s_r \bullet f \end{pmatrix}. 
\end{equation}
Our choice of $s_1 = 1$ ensures that the first component of $F$ is the function $f$ itself. Applying the differential operators $\partial_i$ to the components of $F$ and using the representation above, we find that the following identities hold:
\begin{equation}\label{eq:2.8}
\frac{\partial F}{\partial x_i} = P_i F,  \quad i = 1,\dots,n.
\end{equation}
This system of first-order differential equations is called a \emph{Pfaffian system} associated with $I$. The compatibility of this system imposes an integrability condition on the matrices $P_i$, which takes the form
\begin{equation}
\frac{\partial P_j}{\partial x_i} -\frac{\partial P_i}{\partial x_j}  =   P_i P_j  - P_j P_i , 
\end{equation}
for all $i,j = 1,\dots, n$. This integrability condition admits a natural geometric interpretation. Introducing the matrix-valued one-form 
\begin{equation}
P = \sum_{i = 1}^{n} P_i dx_i, 
\end{equation}
the Pfaffian system can be written compactly as
\begin{equation}
d F = P F. 
\end{equation}
The integrability condition above is precisely the condition that the connection defined by $P$ is flat. In this way, the condition of finite holonomic rank allows the corresponding system to be interpreted as a system of horizontal sections of a flat connection. In geometric settings, this connection is known as the Gauss--Manin connection (see, e.g., \cite[Ch.~12]{arnold1988singularities2}).

\subsection{GKZ hypergeometric systems}\label{sec:2.2}
We now recall the definition of GKZ hypergeometric systems. For a comprehensive treatment, we refer the reader to \cite{GelfandGraevZelevinsky1987,GelfandKapranovZelevinsky1989,GelfandKapranovZelevinsky2008}. The references \cite{SaitoSturmfelsTakayama2000,SattelbergerSturmfels2025}, already cited in the previous section, will also be useful here.

A GKZ hypergeometric system is determined by a finite set $\mathcal{A} = \{ a_1, \dots, a_N \}$ in a lattice $\ZZ^{m}$ together with a parameter vector $\beta = (\beta_1,\dots,\beta_m) \in \CC^{m}$. We assume that $\mathcal{A}$ generates $\ZZ^{m}$ as an abelian group and that there exists a linear form $h \colon \ZZ^{m} \to \ZZ$ satisfying $h(a_i)=1$ for every $i = 1,\dots,N$. Writing the vectors $a_i$ as columns produces an $m \times N$ matrix of maximal rank,
\begin{equation}
A =\begin{pmatrix}
a_{11} & a_{12} & \cdots & a_{1N} \\
a_{21} & a_{22} & \cdots & a_{2N} \\
\vdots & \vdots & \ddots & \vdots \\
a_{m1} & a_{m2} & \cdots & a_{mN}
\end{pmatrix},
\end{equation}
where $a_{ij}$ denotes the $i$-th coordinate of the vector $a_j$. This matrix is usually referred to as the $\mathcal{A}$-\emph{matrix}. We also consider the lattice $L$ of integer relations among the elements of $\mathcal{A}$,
\begin{equation}
L = \{ l = (l_1,\dots,l_N) \in \ZZ^{N} \mid l_1 a_1 + \cdots + l_N a_N = 0 \}.
\end{equation}
This lattice is precisely the kernel of the $\mathcal{A}$-matrix $A$, regarded as a homomorphism from $\ZZ^N$ to $\ZZ^{m}$. The GKZ hypergeometric equations are a system of partial differential equations for an unknown function $\Phi = \Phi(z_1,\dots,z_N)$. This set consists in two groups. The first are the structure equations
\begin{equation}\label{eq:2.15}
\prod_{l_i > 0} \left( \frac{\partial}{\partial z_i}\right)^{l_i} \Phi - \prod_{l_i < 0} \left( \frac{\partial}{\partial z_i}\right)^{-l_i} \Phi =0,
\end{equation}
for every $l = (l_1,\dots,l_N) \in L$.  The second group consists of the homogeneity or Euler equations, 
\begin{equation}\label{eq:2.16}
\sum_{j=1}^{N} a_{ij} z_j \frac{\partial \Phi}{\partial z_j} - \beta_i \Phi = 0,
\end{equation}
for every $i = 1,\dots,m$. It is worth noting that the Euler equations \eqref{eq:2.16} can be integrated explicitly. Consequently, the solutions of the full GKZ hypergeometric system \eqref{eq:2.15}--\eqref{eq:2.16} may be described in terms of functions of $N-m$ independent variables. The corresponding reduced system, however, is generally more complicated than the original one.

We now reformulate the GKZ hypergeometric equations in the language of $D$-ideals reviewed in the previous subsection. To this end, we consider the ring of differential operators with rational function coefficients in the variables $z_1,\dots,z_N$, 
\begin{equation}
R_N = \CC(z_1,\dots,z_N) \langle \partial_1, \dots, \partial_N \rangle. 
\end{equation}
For each $l \in L$, we define the so-called ``box'' operator
\begin{equation}
\Box_{l} = \prod_{l_i > 0} \partial_i^{l_i} - \prod_{l_i < 0}  \partial_i^{-l_i},
\end{equation} 
and, for each $i=1,\dots,m$, we define the first-order differential operator
\begin{equation}
Z_i = \sum_{j=1}^{N}a_{ij}z_j\partial_j-\beta_i.
\end{equation}
In terms of these, the GKZ hypergeometric system \eqref{eq:2.15}--\eqref{eq:2.16} may be rewritten as
\begin{equation} \label{eq:2.20}
\begin{aligned}
\Box_{l}\bullet\Phi&=0, &&l\in L,\\
Z_i\bullet\Phi&=0, &&i=1,\dots,m.
\end{aligned}
\end{equation}
We may therefore consider the $D$-ideal $H_{\mathcal{A}}(\beta)$ generated by all operators $\Box_{l}$ with $l \in L$ together with the operators $Z_i$ for $i = 1,\dots,m$. This ideal is called the \emph{GKZ hypergeometric ideal} associated with the data $\mathcal{A}$ and $\beta$. 

Let us mention two important properties of the GKZ hypergeometric system in connection with the GKZ hypergeometric ideal $H_{\mathscr{A}}(\beta)$ that will be important in what follows. The first concerns the number of solutions of the system. To state it, we first introduce some terminology. Let $\Delta_{\mathcal{A}}$ denote the $m$-dimensional polytope obtained as the convex hull of the finite set $\mathcal{A}$. A parameter vector $\beta \in \CC^{m}$ is said to be \emph{non-resonant} if it does not lie on any integral translate of an affine hyperplane supporting a facet of  $\Delta_{\mathcal{A}}$. In such a case, it is known that the holonomic rank of the GKZ hypergeometric ideal $H_{\mathcal{A}}(\beta)$ coincides with the number of linearly independent solutions of the GKZ hypergeometric system. Moreover, this number is equal to the normalized volume of the polytope $\Delta_{\mathcal{A}}$, where ``normalized'' means $m!$ times the usual Euclidean volume. The second concerns the algebraic structure of the GKZ hypergeometric ideal $H_{\mathcal{A}}(\beta)$. When $\beta$ is non-resonant, one also has that $H_{\mathcal{A}}(\beta)$ is irreducible. Roughly speaking, this means that the corresponding GKZ hypergeometric system cannot be decomposed into simpler independent subsystems. 

It is perhaps worth taking a closer look at what happens in the complementary resonant case. Let $F$ be a face of the polytope $\Delta_{\mathcal{A}}$ with respect to which the parameter vector $\beta$ is resonant. Denote by $\mathcal{A}_F$ the subconfiguration consisting of those vectors of $\mathcal{A}$ lying on $F$, and by $\beta_F$ the parameter induced on the linear span of $F$. The associated GKZ hypergeometric ideal $H_{\mathcal{A}_F}(\beta_F)$ describes a lower-dimensional hypergeometric system attached to the face $F$. One of the remarkable features of the resonant case is that the global GKZ hypergeometric ideal $H_{\mathcal{A}}(\beta)$ already encodes this lower-dimensional system. Indeed, restricting the GKZ system to the face $F$ amounts algebraically to adjoining the variables corresponding to columns outside $F$ to the ideal $H_{\mathcal{A}}(\beta)$ and then intersecting with the ring of differential operators with rational function coefficients generated by the variables indexed by $F$. This construction produces precisely the GKZ hypergeometric ideal $H_{\mathcal{A}_F}(\beta_F)$. In practice, the analysis of the resonant case is therefore reduced to the study of this lower-dimensional GKZ hypergeometric system.

We conclude this section by describing a way of constructing the Pfaffian system associated with the GKZ hypergeometric ideal $H_{\mathcal{A}}(\beta)$. The construction follows the general framework discussed in the previous subsection, specialized to the hypergeometric setting. We begin by fixing a basis $\{l_1,\dots,l_n \}$ of the lattice of relations $L$, where $n = N-m$. Writing each basis element as $l_i = (l_{i1},\dots,l_{iN})$ we introduce the corresponding independent ``toric'' variables\footnote{These variables are called ``toric'' because they are invariant under the algebraic torus action naturally associated with the $\mathcal{A}$-matrix $A$.} 
\begin{equation}
x_i=z_1^{l_{i1}}\cdots z_N^{l_{iN}},
\quad
i=1,\dots,n.
\end{equation}
The Pfaffian system will be formulated in terms of these variables. We therefore consider the ring of differential operators with rational function coefficients in $x_1,\dots,x_n$,
\begin{equation}
\tilde{R}_n = \CC(x_1,\dots,x_n)\langle \theta_1,\dots, \theta_n \rangle,
\end{equation}
where the action of the operators $\theta_i$ on a rational function is given by
\begin{equation}
\theta_i \bullet f = x_{i} \frac{\partial f}{\partial x_i}.
\end{equation}
The next step consists in separating the monomial behavior imposed by the Euler equations from the remaining dependence on the toric variables. To this end, we consider solutions of the form
\begin{equation}\label{eq:2.24}
\Phi(z_1,\dots,z_N) = z_1^{\alpha_1}\cdots z_N^{\alpha_N}f(x_1,\dots,x_n),
\end{equation}
where $\alpha = (\alpha_1,\dots,\alpha_N)$ is chosen so that $A \alpha = \beta$. Substituting this ansatz into the GKZ hypergeometric system \eqref{eq:2.20} requires expressing the differential operators appearing in the first-order differential operators $Z_i$ in terms of the toric variables. This is accomplished by applying the chain rule, under which each logarithmic derivative $z_j \partial_j$ is replaced by the operator $\sum_{i=1}^{n} l_{ij} \theta_i + \alpha_j$ for $j = 1,\dots,N$. In this way, every box operator $\Box_l$ of the original GKZ hypergeometric system is transformed into a differential operator belonging to the ring $\tilde{R}_n$. We denote by $I_{\mathcal{A}}(\beta) \subseteq \tilde{R}_n$ the left ideal generated by all these transformed box operators. Since the Euler equations have already been absorbed into the monomial factor $z_1^{\alpha_1}\cdots z_N^{\alpha_N}$, the quotient $\tilde{R}_n / I_{\mathcal{A}}(\beta)$ encodes the remaining differential structure of the GKZ hypergeometric system in the toric variables. A fundamental result, whose formal proof is provided in Appendix~\ref{app:A}, asserts that this quotient is a finite-dimensional vector space over the field $\mathbb{C}(x_1,\dots,x_n)$. Moreover, its dimension coincides with the holonomic rank of the original GKZ hypergeometric ideal $H_{\mathcal{A}}(\beta)$. The construction now proceeds exactly as in the general case discussed in the previous subsection. We choose a basis $\{s_1,s_2,\dots, s_r \}$ of the vector space $\tilde{R}_n / I_{\mathcal{A}}(\beta)$, where $r = \rank(H_{\mathcal{A}}(\beta))$, represented by monomials in the differential operators $\theta_i$, and assume without loss of generality that $s_1=1$. The vector $F$ is then defined as in \eqref{eq:2.8}, with $f$ given by the toric factor in \eqref{eq:2.24}. Since the operators $\theta_i$ act naturally on the quotient $\tilde{R}_n / I_{\mathcal{A}}(\beta)$, their action on the chosen basis is represented by $r \times r$ matrices $\Omega_i$ with entries in $\CC(x_1,\dots,x_n)$. Consequently, one obtains the Pfaffian system
\begin{equation}\label{eq:2.13}
x_i \frac{\partial F}{\partial x_i} = \tilde{P}_i F,  \quad i = 1,\dots,n.
\end{equation}
In practice, however, while system \eqref{eq:2.13} naturally inherits the toric invariance of the underlying $D$-module, for us it will be computationally advantageous to convert it into a system written in terms of standard partial derivatives
\begin{equation}
 \frac{\partial F}{\partial x_i} = P_i F, \quad i =1,\dots,n,
\end{equation}
where each matrix $P_i$ algebraically absorbs the factor $1/x_i$ directly into its entries. Throughout the explicit examples considered in this work, we adopt this formulation by carrying out this entrywise absorption. We prefer this algebraic choice for three main reasons, related to structure and to algorithmic efficiency. First, standard symbolic algorithms for integrating Feynman integrals and multiloop amplitudes require first-order linear systems written explicitly in terms of standard partial derivatives $\partial_i$. Second, absorbing $1/x_i$ into $P_i$ yields exact algebraic simplifications modulo $I_{\mathcal{A}}(\beta)$, expressing all matrix entries in standard ``$\dd \log$'' form over the complete singularity locus. This reveals the full toric alphabet directly and isolates the physical kinematic thresholds, instead of hiding them behind the boundary divisors $x_i=0$. Third, this entrywise absorption frequently cancels high-degree polynomial factors that appear in the numerators as a result of high-order GKZ reductions. This lowers the overall algebraic degree of the rational coefficients, which in turn simplifies both the zero-curvature conditions and the integrability checks performed in computer algebra systems. 
\subsection{Generalized Euler integral representations}
We now describe GKZ hypergeometric systems associated with generalized Euler integrals. These systems provide the natural framework for the Feynman integrals studied later in this article. Our exposition is based on \cite{MatsubaraHeo2020I}. For a complementary perspective, the reader may also consult \cite{GelfandKapranovZelevinsky1990,Agostini:2022cgv}.

Let us start by explaining what we mean by a generalized Euler integral. To this end, we first fix the underlying geometric data. Let $\mathcal{A}_1 =\{ a_1^{(1)},\dots, a_{N_1}^{(1)} \}, \dots,\mathcal{A}_m = \{ a_1^{(m)},\dots, a_{N_m}^{(m)} \}$ be finite subsets of the lattice $\ZZ^{n}$. For each $k=1,\dots,m$, we consider the Laurent polynomial in the variables $x_1,\dots, x_n$ on the algebraic torus $(\CC^{\times})^{n}$, depending on the parameters $z_1^{(k)}, \dots, z_{N_k}^{(k)}$, given by
\begin{equation}
f_{k}(x,z^{(k)}) = \sum_{j = 1}^{N_k} z_{j}^{(k)} x_1^{a_{1j}^{(k)}} \cdots x_n^{a_{nj}^{(k)}},
\end{equation}
where $a_{ij}^{(k)}$ denotes the $i$-th coordinate of the vector $a_{j}^{(k)}$. We also fix two parameter vectors $\gamma = (\gamma_1,\dots,\gamma_m)$ and $\nu = (\nu_1,\dots,\nu_n) \in \CC^{n}$. The generalized Euler integral associated with these data is then defined by 
\begin{equation}\label{eq:2.27}
I_{\Gamma}(z; \gamma,\nu) = \int_{\Gamma} \dd x_1 \cdots \dd x_n x_{1}^{\nu_1-1} \cdots x_{n}^{\nu_n-1}  f_{1}(x,z^{(1)})^{-\gamma_1} \cdots f_{m}(x,z^{(m)})^{-\gamma_m}   ,
\end{equation}
where $\Gamma$ is a suitable integration cycle in $(\CC^{\times})^{n}$. Integrals of this form include, as special cases, many of the classical multidimensional Euler integrals, as well as a wide variety of integrals arising naturally in algebraic geometry, singularity theory, and mathematical physics; see, e.g., \cite{passare1998studying,berkesch2014eulermellin,hosono1995mirror}.

We now explain how generalized Euler integrals are connected with GKZ hypergeometric systems. To this end, we must extract, from the data defining the integral, the corresponding data defining a GKZ hypergeometric system. Let $\{e_1,\dots,e_m \}$ denote the canonical basis of the lattice $\ZZ^m$. We then consider the finite subset $\mathcal{A}$ of the lattice $\ZZ^{m+n}$ given by
\begin{equation}
\mathcal{A}=
\left\{
\begin{pmatrix}
e_1\\
a_1^{(1)}
\end{pmatrix},
\dots,
\begin{pmatrix}
e_1\\
a_{N_1}^{(1)}
\end{pmatrix},
\begin{pmatrix}
e_2\\
a_1^{(2)}
\end{pmatrix},
\dots,
\begin{pmatrix}
e_2\\
a_{N_2}^{(2)}
\end{pmatrix},
\dots,
\begin{pmatrix}
e_m\\
a_1^{(m)}
\end{pmatrix},
\dots,
\begin{pmatrix}
e_m\\
a_{N_m}^{(m)}
\end{pmatrix}
\right\}.
\end{equation}
Accordingly, the $\mathcal{A}$-matrix associated with this configuration is
\begin{equation}
A= \left( \begin{array}{c|c|c|c} \begin{array}{cccc} 1& 1 & \cdots&1\\ 0& 0 & \cdots&0\\ \vdots& \vdots & \ddots &\vdots\\ 0& 0 & \cdots&0 \end{array} & \begin{array}{cccc} 0& 0 &\cdots&0\\ 1& 1 & \cdots&1\\ \vdots & \vdots & \ddots  &\vdots\\ 0& 0 & \cdots&0 \end{array} & \begin{array}{c} \cdots \\ \cdots \\ \ddots \\ \cdots \end{array} & \begin{array}{cccc} 0 & 0 &\cdots&0\\ 0& 0 & \cdots&0\\ \vdots& \vdots &\ddots &\vdots\\ 1& 1& \cdots&1 \end{array} \\ \hline A_1&A_2&\cdots&A_m \end{array} \right),
\end{equation} 
where $A_k$ denotes the $\mathcal{A}_k$-matrix for $k=1,\dots,m$. We also consider the parameter vector $\beta = (\gamma_1,\dots,\gamma_m,-\nu_1,\dots,-\nu_n) \in \CC^{m+n}$. As explained in the previous subsection, these data determine a GKZ hypergeometric system and, in the language adopted there, the corresponding GKZ hypergeometric ideal $H_{\mathcal{A}}(\beta)$. A fundamental result, proved in Theorem 2.3 of \cite{MatsubaraHeo2020I}, states that the generalized Euler integral \eqref{eq:2.27} belongs to the solution space $\Sol(H_{\mathcal{A}}(\beta))$. As such, this solution is referred to as the generalized Euler integral representation of the GKZ hypergeometric system defined by $\mathcal{A}$ and $\beta$. 

It is worth pointing out that the special form of the configuration $\mathcal{A}$ has an important consequence in the resonant case. As discussed in the previous subsection, resonance leads to lower-dimensional GKZ hypergeometric systems associated with suitable faces of the polytope $\Delta_{\mathcal{A}}$. In the present setting, these faces often arise by deleting one or more entire blocks from the configuration above. Consequently, the resulting lower-dimensional GKZ system is again associated with a generalized Euler integral of the same form, involving fewer Laurent polynomials. This observation will play an important role in the examples considered later.

\section{GKZ hypergeometric systems and Feynman integrals: a new approach}\label{sec:newApproach}
In this section, we present our proposal for interpreting Feynman integrals within the framework of GKZ hypergeometric systems. Unlike the standard treatments in the literature \cite{delaCruz:2019skx,Klausen2022}, our starting point is not the LP representation \cite{Lee:2013hzt}, but rather the Schwinger parametric representation of Feynman integrals. This perspective will enable us to construct explicit Pfaffian systems which, after a suitable gauge transformation, can be mapped into a \emph{canonical system} of differential equations governing the integral.

\subsection{Schwinger parametric representation}
We begin by recalling the Schwinger parametric representation of a Feynman integral. While this material is well known, a brief review is convenient to fix the notation and the elements involved. We adopt the setup and terminology of \cite{Panzer2015} and \cite{Klausen2022}. Further details and a more complete exposition of the general theory can be found in \cite{Smirnov2012,AbreuBrittoDuhr2022,Weinzierl2022}.

Recall that a Feynman integral associated with a Feynman graph $\Gamma$ having $E$ external edges, $I$ internal edges, and $L$ loops is defined in $D$ space-time dimensions by
\begin{eqnarray}\label{eq:feyn-int}
I_{\Gamma}=\int\prod_{r=1}^L\frac{\dd^D\ell_r}{(2\pi)^D}\prod_{a=1}^I\frac{1}{D_a^{\nu_a}}
\end{eqnarray}
where the $D_a$ are quadratic polynomials in the $\ell$'s and the powers $\nu_a$ could in general be complex, but here we will take them $\nu_a\in\mathbb{Z}$.

The central object in this work is the Schwinger representation of \eqref{eq:feyn-int}, which can be obtained directly using the Schwinger trick (see \cite{Sameshima:2019qpr} for a detailed discussion of the different representations of Feynman integrals). Here, instead, we derive the relevant representation directly from the graph structure. We adopt the setup and terminology of \cite{Panzer2015} and \cite{Klausen2022}. Further details and a more complete exposition of the general theory can be found in \cite{Smirnov2012,AbreuBrittoDuhr2022,Weinzierl2022}.

Let $\Gamma$ be a connected Feynman graph. We will denote its set of vertices by $V$, its set of external edges by $E$, and its set of internal edges by $I$. We cast the relation between these quantities and the number of loops $L$
\begin{equation}
L = I - V + 1.
\end{equation}
This relation says, in particular, that to pass from $\Gamma$ to a subgraph with $V - 1$ edges, the minimum number needed for a connected subgraph containing all vertices of $\Gamma$ to remain connected is to remove exactly $L$ internal edges. A subgraph obtained in this way is what is known as a \emph{spanning tree} of $\Gamma$: a subgraph $T$ that contains all vertices of $\Gamma$ and is connected with exactly $V - 1$ edges. We denote the set of all spanning trees of $\Gamma$ by $T_{\Gamma}$. Similarly, removing $L+1$ internal edges reduces $\Gamma$ to a subgraph with two disjoint connected components, containing no loops and covering all vertices of $\Gamma$. Such a subgraph is called a \emph{spanning $2$-tree} of $\Gamma$. Equivalently, any spanning $2$-tree $T_1 \cup T_2$ can be obtained by removing a single internal edge from a spanning tree $T \in T_{\Gamma}$. The two tree components $T_1$ and $T_2$ induce a partition of the vertex set into two complementary, non-empty subsets $V_1$ and $V_2$ such that $V = V_1 \sqcup V_2$. We denote the set of all spanning $2$-trees of $\Gamma$ by $T_{\Gamma}^{(2)}$.

To represent \eqref{eq:feyn-int}, we assign kinematical and structural data to its edges. To each external edge $e \in E$, we attach an external momentum $k_e$ in a $D$-dimensional Minkowski spacetime, subject to momentum conservation,
\begin{equation}
\sum_{e \in E} k_e = 0.
\end{equation}
To each internal edge $e \in I$, we assign a mass $m_e$ corresponding to the propagated particle, as well as a Schwinger parameter $\alpha_e \in \mathbb{R}_{>0}$. Such parameters allow us to encode the graph's geometry and kinematics into two homogeneous polynomials, $\mathcal{U}(\alpha)$ and $\mathcal{F}(\alpha)$, known as the \emph{first and second Symanzik polynomials}. The first Symanzik polynomial $\mathcal{U}(\alpha)$ is a homogeneous polynomial of degree $L$ defined combinatorially by summing over all spanning trees of $\Gamma$,
\begin{equation}
\mathcal{U}(\alpha) = \sum_{T \in T_{\Gamma}} \prod_{e \notin I} \alpha_e.
\end{equation}
The second Symanzik polynomial $\mathcal{F}(\alpha)$ is homogeneous of degree $L+1$ and incorporates both the external kinematics and the internal masses. It decomposes as
\begin{equation}
\mathcal{F}(\alpha) = \mathcal{F}_{0}(\alpha) + \mathcal{U}(\alpha) \sum_{e \in I} m_e^2 \alpha_e,
\end{equation}
where the kinematical contribution $\mathcal{F}_{0}(\alpha)$ is given by a sum over all spanning $2$-trees,
\begin{equation}
\mathcal{F}_{0}(\alpha) = \sum_{T_1 \cup T_2 \in T_{\Gamma}^{(2)}} \left( \sum_{e \in E_1} k_e \right)^2 \left( \prod_{e \notin I_1 \cup T_2)} \alpha_e \right).
\end{equation}
Here, $E_1$ denotes the set of external edges attached to vertices in $V_1$, so that $\sum_{e \in E_1} k_e$ represents the total momentum flowing into the connected component $T_1$.
With these polynomials at hand, we can state the Schwinger parametric representation of the Feynman integral \eqref{eq:feyn-int}
\begin{eqnarray}\label{eq:I-powers-nu}
I_{\Gamma}= \frac{1}{(4\pi)^{\beta L}}\prod_{e\in I}\frac{1}{\Gamma(\nu_e)}\int_{\alpha_e\geq0}\dd \alpha_e\alpha_e^{\nu_e-1}\frac{e^{-\mathcal{F}(\alpha)/\mathcal{U}(\alpha)}}{\mathcal{U}(\alpha)^{\beta}}     
\end{eqnarray}
where we introduced $\beta=D/2$. 

In its present form, the system is not in the hypergeometric framework we shall consider in this work. Instead, our approach relies on a different formulation, which we discuss in detail in the next paragraph.

\subsection{Derivation of the associated GKZ hypergeometric system}
We now turn to the central stage of our formulation. The transition presented here constitutes the key step in the overall construction developed in this work.  As far as we know, this approach has not appeared before in the literature.

Our starting point is that the parametric representation \eqref{eq:I-powers-nu} can be manipulated further to make the Feynman integral's underlying GKZ hypergeometric structure manifest, recasting $I_{\Gamma}$ as a generalized Euler integral. This amounts to performing a change of variables to eliminate the exponential term from \eqref{eq:I-powers-nu}, where the homogeneity of the Symanzik polynomials $\mathcal{U}(\alpha)$ and $\mathcal{F}(\alpha)$ is precisely what allows one to integrate out the radial factor and eliminate the exponential.

To make this representation explicit, let $N = I$ denote the total number of internal edges of the graph $\Gamma$. We order the Schwinger parameters as $\alpha = (\alpha_1, \dots, \alpha_N) \in \mathbb{R}_{>0}^N$. With this notation, the parametric representation \eqref{eq:I-powers-nu} takes the form
\begin{eqnarray}
I_{\Gamma}= \frac{1}{(4\pi)^{\beta L}}\prod_{i=1}^N\frac{1}{\Gamma(\nu_i)}\int_{\alpha_i\geq0}\dd \alpha_i\alpha_i^{\nu_i-1}\frac{e^{-\mathcal{F}(\alpha)/\mathcal{U}(\alpha)}}{\mathcal{U}(\alpha)^{\beta}}.     
\end{eqnarray}
To perform the integration over the overall radial scale, we introduce a suitable change of variables given by
\begin{eqnarray}\label{eq:new-variables}
(\alpha_1,\dots,\alpha_{N-1},\alpha_N)=(tu_1,\dots,tu_{N-1},t)
\end{eqnarray}
where $u = (u_1, \dots, u_{N-1}) \in \mathbb{R}_{>0}^{N-1}$ parametrizes the affine slice and $t > 0$ acts as the radial scale parameter. The measure of integration changes accordingly under the change of variables as
\begin{eqnarray}
\prod_{i=1}^{N} \dd \alpha_i  \frac{\alpha_i^{\nu_i - 1}}{\Gamma(\nu_i)}\longrightarrow \left(\prod_{i=1}^{N-1} \dd u_i \frac{u_i^{\nu_i - 1}}{\Gamma(\nu_i)} \right)  \dd t t^{\nu-1}
\end{eqnarray}
where we have set $\nu = \sum_{i=1}^{N}\nu_i$. Likewise, the homogeneity of the Symanzik polynomials gives
\begin{eqnarray}
\mathcal{U}(\alpha) &\longrightarrow& t^{L} \widetilde{\mathcal{U}}(u), \\
\mathcal{F}(\alpha) &\longrightarrow& t^{L+1} \widetilde{\mathcal{F}}(u),
\end{eqnarray}
which allow us to define the dehomogenized Symanzik polynomials $\widetilde{\mathcal{U}}(u)$ and $\widetilde{\mathcal{F}}(u)$ on $\mathbb{R}_{>0}^{N-1}$ as
\begin{eqnarray}
\widetilde{\mathcal{U}}(u) = \mathcal{U}(u_1, \dots, u_{N-1}, 1), \\
\widetilde{\mathcal{F}}(u) = \mathcal{F}(u_1, \dots, u_{N-1}, 1),
\end{eqnarray}
this implies that, in the exponential 
\begin{eqnarray}
e^{-\mathcal{F}(\alpha)/\mathcal{U}(\alpha)} \longrightarrow e^{-t \widetilde{\mathcal{F}}(u)/ \widetilde{\mathcal{U}}(u)}. 
\end{eqnarray}
Therefore, the integral in \eqref{eq:I-powers-nu} takes the form
\begin{eqnarray}
I_{\Gamma} = \frac{1}{(4\pi)^{\beta L}}\prod_{i=1}^{N-1} \frac{1}{\Gamma(\nu_i)}\int_{u_i\geq0}\dd u_iu_i^{\nu_i-1}\int_{t\geq0}\dd tt^{\nu-L\beta-1}\frac{e^{-t\widetilde{\mathcal{F}}(u)/\widetilde{\mathcal{U}}(u)}}{\widetilde{\mathcal{U}}(u)^{\beta}}.
\end{eqnarray}
Evaluating the integral over $t$ via the well-known Gamma function integral representation gives the final result
\begin{eqnarray}\label{eq:u-int}
I_{\Gamma} = \frac{1}{(4\pi)^{\beta L}}\prod_{i=1}^{N-1} \frac{\Gamma(\nu-\beta)}{\Gamma(\nu_i)}\int_{u_i\geq0}\dd u_iu_i^{\nu_i-1}\widetilde{\mathcal{U}}(u)^{\nu-(L+1)\beta}\widetilde{\mathcal{F}}(u)^{\beta-\nu}.   
\end{eqnarray}
We remark that the application of the same procedure to the Feynman integral in the LP representation leads to the same result.

Comparing the Euler integral representation \eqref{eq:u-int} with the general definition \eqref{eq:2.27}, we immediately recognize $I_\Gamma$ as a generalized Euler integral in $n = N-1$ variables involving $m = 2$ Laurent polynomials, namely $\widetilde{\mathcal{U}}(u)$ and $\widetilde{\mathcal{F}}(u)$. Consequently, $I_\Gamma$ belongs to the solution space $\operatorname{Sol}(H_{\mathcal{A}}(\beta))$ of a GKZ hypergeometric system determined by this integral representation. The parameter vector $\vec{\beta} = (\gamma_1, \gamma_2, -\nu_1, \dots, -\nu_{N-1}) \in \CC^{N+1}$ is directly read off from the exponents as $\gamma_1 = (L+1)\beta - \nu$ and $\gamma_2 = \nu - L\beta$, while the remaining components coincide with the propagator exponents $\nu_i$. 
The underlying configuration $\mathcal{A}$ is the finite subset of the lattice $\ZZ^{N+1}$ formed by the exponent vectors of the monomials in $\widetilde{\mathcal{U}}(u)$ and $\widetilde{\mathcal{F}}(u)$, augmented by the canonical basis vectors $e_1, e_2 \in \ZZ^2$. Consequently, the associated $\mathcal{A}$-matrix exhibits a characteristic two-block structure given by
\begin{equation}
A =
\left(
\begin{array}{ccc|ccc}
1 & \cdots & 1 & 0 & \cdots & 0 \\
0 & \cdots & 0 & 1 & \cdots & 1 \\ \hline
& A_{1} & & & A_{2} &
\end{array}
\right),
\end{equation}
where $A_{1}$ and $A_{2}$ are the matrices whose columns are the exponent vectors of the monomials appearing in $\widetilde{\mathcal{U}}(u)$ and $\widetilde{\mathcal{F}}(u)$, respectively. It is worth emphasizing that the dependence on the kinematic variables enters the GKZ system exclusively through the coefficients of $\widetilde{\mathcal{F}}(u)$, as the first Symanzik polynomial $\widetilde{\mathcal{U}}(u)$ depends solely on the internal graph topology and possesses purely numerical coefficients.

\section{Examples}\label{sec:examples}
This section is devoted to applying the theory developed above to a series of detailed examples of increasing complexity.

\subsection{Two-mass bubble integral}
Let us start with a warm-up example to show with detail how the approach works. The example is the bubble integral with two different masses. We present the original Feynman integral 
\begin{eqnarray}\label{eq:bub-feyn}
I_{\mathrm{bubble}}=\int\frac{\dd^D\ell}{(2\pi)^D}\frac{1}{(\ell^2-m_1^2)^{\nu_1}((\ell+p)^2-m_2^2)^{\nu_2}}.
\end{eqnarray}
Following the procedure shown in the previous section, the inhomogeneous Symanzik polynomials read
\begin{eqnarray}
\widetilde{\mathcal{F}}(u)&=&m^2_2 +(s+m^2_1+m^2_2)u+m^2_1u^2,\\ 
\widetilde{\mathcal{U}}(u)&=&(1+u).
\end{eqnarray}
here we take the convention $s=-p^2$.

Now, up to a global minus sign due to the conventions, we have the integral \eqref{eq:bub-feyn} turned into
\begin{eqnarray}
I_{\mathrm{bubble}}=\frac{1}{(4\pi)^{\beta}}\frac{\Gamma(\nu_1 + \nu_2 - \beta)}{\Gamma(\nu_1)\Gamma(\nu_1)}\Phi
\end{eqnarray}
where $\Phi$ is the generalized Euler integral to analyze
\begin{eqnarray}
\Phi = \int_{u\geq0}\dd uu^{\nu_1-1}(1+u)^{\nu_1 + \nu_2 - 2\beta}(m^2_2 +(s+m^2_1+m^2_2)u+m^2_1u^2)^{\beta-\nu_1-\nu_2}.    
\end{eqnarray}

For the GKZ system, we promote the parameters of the integral to variables. In this sense, we identify the \emph{non-physical} variables that appear in the $\widetilde{\mathcal{U}}$-polynomial as $w_1=w_2=1$, and the physical ones from the $\widetilde{\mathcal{F}}$-polynomial as $z_1=m^1_2$, $z_2=(s+m^2_1+m^2_2)$, $z_3=m^2_1$. The associated Cayley matrix and vector for the system are
\begin{eqnarray}\label{eq:gkz-bub-full}
A=\begin{pmatrix}
1&1&0&0&0\\
0&0&1&1&1\\
0&1&0&1&2\\
\end{pmatrix},\qquad
\vec{\beta}=
\begin{pmatrix}
\nu_1+\nu_2-2\beta\\
\beta-\nu_1-\nu_2\\
-\nu_1
\end{pmatrix}.    
\end{eqnarray}

The previous matrix has codimension two. If we pursue the solution of the integral using this GKZ system, we encounter the usual issue that the resulting representation involves an excess of variables, which can only be removed at the end (see, e.g., \cite{delaCruz:2019skx}). From our perspective, this also manifests itself in the Pfaffian system through a redundant alphabet, indicating that a reduction associated with a facet of the Newton polytope should be sought. Reductions of GKZ systems for Feynman integrals were recently introduced in \cite{Vanhove:2018mto,Britto:2026qjn} in the context of master integrals. Here, we apply such a reduction instead in the context of $D$-modules.

Working with generalized Euler integrals can show directly the candidates for a reduction of a GKZ system by a facet. Looking at \eqref{eq:gkz-bub-full} we notice that a good candidate for a reduction is to take the reduced $\mathcal{A}$-matrix and constant vector from the block related to the physical variables only, i.e.
\begin{eqnarray}
A=\begin{pmatrix}
1&1&1\\
0&1&2\\
\end{pmatrix},\qquad
\vec{\beta}=
\begin{pmatrix}
\beta-\nu_1-\nu_2\\
-\nu_1
\end{pmatrix}.    
\end{eqnarray}
We proceed with the analysis by calculating the kernel of $A$, which is generated by $l=(1,-2,1)$. From this we have one toric variable, namely
\begin{eqnarray}
x=\frac{z_1z_3}{z^2_2}=\frac{m_1^2m_2^2}{(m_1^2+m_2^2+s)^2},    
\end{eqnarray}
For the proposed solution for the GKZ system \eqref{eq:2.24} we solve $A\vec{\alpha}=\vec{\beta}$ and obtain the following result
\begin{eqnarray}\label{eq:bubble-sol-a}
\Phi=z_1^{\beta-\nu_2}z_2^{-\nu_1}f(x).    
\end{eqnarray}
We plug-in this solution into the toric equation
\begin{eqnarray}\label{eq:toric-bubble}
(\partial_1\partial_3-\partial^2_2)\Phi=0,    
\end{eqnarray}
and that take us to an expected hypergeometric equation
\begin{eqnarray}\label{eq:bubble-hypeq}
x(1-4x)f''(x)+(\beta - \nu_2 +1 -x(6+4\nu_1))f'(x)-\nu_1(\nu_1+1)f(x)=0.
\end{eqnarray}

Looking at \eqref{eq:bubble-hypeq} we can easily determine that its holonomic rank is two, since we can solve $f''(x)$ in terms of $f(x)$ and $f'(x)$. The Frobenius basis for the system is
\begin{eqnarray}
J=\begin{pmatrix}
J_1\\
J_2
\end{pmatrix}   
=\begin{pmatrix}
f\\
xf'
\end{pmatrix},    
\end{eqnarray}
and from \eqref{eq:bubble-hypeq} we obtain its associated first order system
\begin{eqnarray}\label{eq:bubble-GM}
\partial_x\begin{pmatrix}
J_1\\
J_2
\end{pmatrix}
=
\begin{pmatrix}
0&\frac{1}{x}\\
\frac{\nu_1(\nu_1+1)}{1-4x}&\frac{\nu_2-\beta}{x}+\frac{2-4\beta+4\nu_1+4\nu_2}{1-4x}
\end{pmatrix}
\begin{pmatrix}
J_1\\
J_2
\end{pmatrix}.
\end{eqnarray}

The equation \eqref{eq:bubble-GM} is not in canonical form, but it is fuchsian. We see that it can be written completely in $\dd\log$ integration kernels, with the alphabet in toric variables $\{x,1-4x\}$. Now we rewrite \eqref{eq:bubble-GM}
\begin{eqnarray}
\dd J&=&(\dd \Omega) J\notag\\
&=&(\Omega_0\dd \log(x)+\Omega_1\dd \log(1-4x))J
\end{eqnarray}
where
\begin{eqnarray}
\Omega_0=
\begin{pmatrix}
0&1\\
0&\nu_2-\beta
\end{pmatrix},\quad
\Omega_1=
\begin{pmatrix}
0&0\\
\nu_1(\nu_1+1)&2-4\beta+4\nu_1+4\nu_2
\end{pmatrix}.
\end{eqnarray}

In its present state, we can solve the system once we set a suitable boundary condition $J_0$ as an iterated Chen integral
\begin{eqnarray}
J(x)=\mathcal{P}\exp\left(\int_0^x\mathrm{d}\Omega\right)J_0.
\end{eqnarray}
Instead, we will transform the system into its $\epsilon$-form in order to make the iterated-integral structure manifest and organize the solution order by order in $\epsilon$.

Before solving the system explicitly, we highlight a feature of this example. The holonomic rank of the associated GKZ system is strictly smaller than the number of master integrals obtained from integration-by-parts identities for the two-mass bubble integral. In particular, \textsc{LiteRed} \cite{Lee:2013mka} identifies three master integrals, while the holonomic rank is only two. This discrepancy demonstrates that, with a suitable reduction, the dimension of the GKZ solution space need not coincide with the number of IBP master integrals.

We now specialize to the physical case of unit propagator powers, $\nu_1=\nu_2=1$, with $\beta=1-\epsilon$. This specialization will allow us to take this GKZ system to a canonical version, like in the DE approach \cite{Henn:2013pwa}. Remember that our system \eqref{eq:bubble-GM} can be written compactly as $\dd J= PJ$, and for the particular values it takes the form
\begin{eqnarray}\label{eq:bubble-GMValues}
\partial_x\begin{pmatrix}
J_1\\
J_2
\end{pmatrix}
=
\begin{pmatrix}
0&\frac{1}{x}\\
\frac{2}{1-4x}&\frac{6x+\epsilon}{x(1-4x)}
\end{pmatrix}
\begin{pmatrix}
J_1\\
J_2
\end{pmatrix}.
\end{eqnarray}

We now turn to another important observation. Although GKZ provides direct access to the singular locus of the system without requiring the solution of any differential equation, it does not, in general, yield a canonical form. A major obstacle is the appearance of square roots in Feynman integrals \cite{Besier:2018jen}, for which no systematic procedure is currently known to derive the appropriate rationalizing variables directly from the GKZ data. In the present example, we therefore introduce the well-known change of variables that rationalizes the square root $\sqrt{1-4x}$, i.e.
\begin{eqnarray}\label{eq:changeVar}
    x=\frac{y}{(1+y)^2}.
\end{eqnarray}
The $P$ matrix, behaving as a one-form, has its components transforming accordingly under a change of variables $x_i=x_i(y)$, namely
\begin{eqnarray}
 P'_i=\frac{\partial x_j}{\partial y_i} P_j    
\end{eqnarray}
Performing the change of variables in the $P$ matrix from \eqref{eq:bubble-GMValues} we are left with the matrix
\begin{eqnarray}
 P=
\begin{pmatrix}
0 & \frac{1-y}{y (y+1)} \\
 \frac{2}{1-y^2} & \frac{6 y + \epsilon(y+1)^2 }{y-y^3}   
\end{pmatrix}.    
\end{eqnarray}
Now the system is ready to be taken to a canonical form. This action can be performed with the \textsc{Canonica}  package \cite{Meyer:2017joq}. Remembering that the Pfaffian matrix changes under a gauge transformation as follows 
\begin{eqnarray}
 P'= T^{-1} PT-T^{-1}\dd T,
\end{eqnarray}
we have the transformation matrix
\begin{eqnarray}
T=\epsilon^{-1}
\begin{pmatrix}
\frac{y+1}{2 (y-1)} & 0 \\
 \frac{(y+1)(4y + \epsilon(5 + 4y - y^2))}{4 (y-1)^3} & \frac{\epsilon(y+1)^2}{4 (y-1)^2}
\end{pmatrix},    
\end{eqnarray}
and the canonical $\epsilon$-form for the Pfaffian system
\begin{eqnarray}
 P'=\epsilon\begin{pmatrix}
\frac{y-5}{2 (y-1) y} & -\frac{1}{2 y} \\
 \frac{3 (y-5)}{2 (y-1) y} & -\frac{3}{2 y}    
\end{pmatrix}.    
\end{eqnarray}
The $P$ matrix is already written on a single variable for this system and can immediately be expanded in $\dd \log$ integration kernels as follows
\begin{eqnarray}
\dd\Omega = \epsilon\sum_{i=1}^2\Omega_i\dd \log(W_i)   
\end{eqnarray}
with the alphabet $\{W_1,W_2\}=\{y,y-1\}$, and matrices of coefficients
\begin{eqnarray}
\Omega_1=
\begin{pmatrix}
\frac{5}{2} & -\frac{1}{2} \\
\frac{15}{2} & -\frac{3}{2} 
\end{pmatrix}\qquad
\Omega_2=
\begin{pmatrix}
-2 & 0 \\
-6 &  0
\end{pmatrix}.
\end{eqnarray}

Being on an $\epsilon$-form the solution of the system is straightforward as a Chen iterated integral. What is needed is a boundary condition that guarantees that the solution is the original Feynman integral for the bubble, this can be achieved by taking the result of the Feynman integral \eqref{eq:bub-feyn} at the pseudo-threshold, i.e. $x=1/4$ or $y=1$, (see Appendix \ref{app:boundary-conditionA}). Finally, our solution back in the original toric variables has the expected form
\begin{eqnarray}
f(x)=\frac{1}{\sqrt{1-4x}}\left\{\log\left(\frac{1+\sqrt{1-4x}}{1-\sqrt{1-4x}}\right)+\mathcal{O}(\epsilon)\right\},
\end{eqnarray}
and, returning to the physical variables, we remember that
\begin{eqnarray}
    x=\frac{m_1^2m_2^2}{(m_1^2+m_2^2+s)^2}
\end{eqnarray}
and using \eqref{eq:bubble-sol-a}, we obtain
\begin{align}\label{eq:bubble-sol}
    \Phi(m_1^2,m_2^2,s;2-2\epsilon)&=\frac{(m_2^2)^{-\epsilon}\|m_1^2+m_2^2+s\|}{(m_1^2+m_2^2+s) \sqrt{\lambda}}\left\{\log\left(\frac{m_1^2+m_2^2+s+\sqrt{\lambda}}{m_1^2+m_2^2+s-\sqrt{\lambda}}\right)+\mathcal{O}(\epsilon)\right\}\notag\\
    &=\mathrm{sgn}(m_1^2+m_2^2+s)\frac{1}{\sqrt{\lambda}}\left\{\log\left(\frac{m_1^2+m_2^2+s+\sqrt{\lambda}}{m_1^2+m_2^2+s-\sqrt{\lambda}}\right)+\mathcal{O}(\epsilon)\right\}
\end{align}
where $\lambda=\lambda(m_1^2,m_2^2,-s)=(s+m_1^2+m_2^2)^2-4m_1^2m_2^2$ is the \textit{K{\"a}ll{\'e}n} function. After the replacement $s=-p^2$, the result obtained in \eqref{eq:bubble-sol} agrees with the expression in Appendix B of \cite{Abreu:2017mtm}.

\subsection{On-shell massless box integral}
    Another relevant warm-up example is provided by the massless on-shell box integral, with all propagators raised to unit power. Under the coordinate transformation introduced in \eqref{eq:new-variables}, the Symanzik polynomial take the form
    \begin{eqnarray}
        \widetilde{\mathcal{F}}(u_1,u_2,u_3)&=& -su_1u_3-tu_2,\\
        \widetilde{\mathcal{U}}(u_1,u_2,u_3)&=&1+u_1+u_2+u_3,
    \end{eqnarray}
where, for the four point case, we take the four-point Mandelstam variables as
\begin{eqnarray}\label{eq:4p-mand}
s&=&(p_1+p_2)^2,\\
t&=&(p_2+p_3)^2.
\end{eqnarray}
Therefore, $w_1=w_2=w_3=w_4=1$ and $z_1=-s$, $z_2=-t$.

    For this system, however, no nontrivial reduction is possible, since every relevant proper face has a zero-dimensional kernel and therefore gives rise to no nontrivial toric relations. We will thus work with the full Cayley matrix and its corresponding constant vector, namely,

    \begin{eqnarray}
        A=
        \begin{pmatrix}
        1 & 1 & 1 & 1 & 0 & 0 \\
         0 & 0 & 0 & 0 & 1 & 1 \\
         0 & 1 & 0 & 0 & 1 & 0 \\
         0 & 0 & 1 & 0 & 0 & 1 \\
        0 & 0 & 0 & 1 & 1 & 0
        \end{pmatrix},\qquad
        \vec{\beta}=
        \begin{pmatrix}
        4-2\beta\\
        \beta-4\\
        -1\\
        -1\\
        -1
        \end{pmatrix}.
    \end{eqnarray}

    Following a procedure analogous to that used in the bubble example, we chose a generator of the integer kernel of $A$, namely, $l=(-1,1,-1,1,-1,1)$. From this generator, we obtained the corresponding toric variable
    \begin{eqnarray}
        x=\frac{w_2w_4z_2}{w_1w_3z_1}=\frac{t}{s}
    \end{eqnarray}
    and the solution ansatz
    \begin{eqnarray}\label{eq:ansatz-sol}
        \Phi=\frac{w_2^{2-\beta}w_4^{2-\beta}z_1^{\beta-3}}{z_2}f(x),
    \end{eqnarray}
    where $f(x)$ satisfy the following equation
    \begin{multline}\label{eq:system-box}
        (-4+4\beta-\beta^2)f(x) +x(4+4x-4\beta-x\beta+\beta^2)f'(x)+\\
        x^2(6+6x-2\beta-x\beta)f''(x)+x^3(1+x)f^{(3)}(x)=0.
    \end{multline}

    Now, using \textsc{Singular} \cite{GPS01}, we obtained that the holonomic rank of the above system is three. This motivates the following choice of Frobenius basis
    \begin{eqnarray}\label{eq:basis-on-box}
        J=\begin{pmatrix}
       J_1\\
       J_2\\
       J_3
        \end{pmatrix}=\begin{pmatrix}
        f\\
        xf'\\
        x(xf')'\\
        \end{pmatrix},
    \end{eqnarray}
    and from \eqref{eq:system-box}, taking $\beta=2-\epsilon$, we obtained its associated first order system
    \begin{eqnarray}
        \partial_x\begin{pmatrix}
            J_1\\
       J_2\\
       J_3
        \end{pmatrix}= P_x\begin{pmatrix}
            J_1\\
       J_2\\
       J_3
        \end{pmatrix}
    \end{eqnarray}
    where
    \begin{eqnarray}
         P_x=\begin{pmatrix}
            0 & \frac{1}{x} & 0\\
            0 & 0 & \frac{1}{x}\\ 
            0 & 0& \frac{1-x}{x(1+x)}
        \end{pmatrix}+\epsilon\begin{pmatrix}
            0& 0&0\\
            0& 0& 0\\ 
            0&\frac{2}{x(1+x)}&-\frac{x+2}{x(1+x)}
        \end{pmatrix}+\epsilon^2\begin{pmatrix}
            0 & 0& 0\\
            0& 0 &0 \\
            \frac{1}{x(1+x)} & -\frac{1}{x(1+x)} &0
        \end{pmatrix}.
    \end{eqnarray}
    The differential system obtained above is not yet in a canonical form. However, a further application of the \textsc{Canonica} package,  allow us to find a gauge transformation to a canonical basis as $J = T \:G$, where
    \begin{eqnarray}
        T(x;\epsilon)=\begin{pmatrix}
            1 & 0 & 0\\
            -\frac{\epsilon}{2} & -\frac{\epsilon}{2} & \frac{\epsilon(x-1)}{2(1+x)}\\
            0 & \frac{\epsilon^2}{1+x} &\frac{\epsilon(\epsilon+x)}{(1+x)^2}
        \end{pmatrix}
    \end{eqnarray}
    and $\partial_xG=d\Omega\:G$, where
    \begin{eqnarray}
        d\Omega=\epsilon\sum_{i=1}^{2}\Omega_i\:\dd \log(W_i),
    \end{eqnarray}
    with $\{W_1,W_2\}=\{x,1+x\}$ and
    \begin{eqnarray}
        \Omega_1=\begin{pmatrix}
            -\frac{1}{2}&-\frac{1}{2}&-\frac{1}{2}\\
            \frac{1}{2}&-\frac{1}{2} &-\frac{1}{2}\\
            0 & -1 & -1
        \end{pmatrix},\quad \Omega_2=\begin{pmatrix}
            0 & 0 &1\\
            0 & 0 & 0\\
            0 & 0& 1
        \end{pmatrix}.
    \end{eqnarray}
    We chose the boundary conditions for our system to be $f(x=-1)$ and $f'(x=-1)$. Since $x=-1$ corresponds to the kinematic condition $s+t=0$, determining these values amounts to calculating the Feynman integral and its first derivative at the pseudo-threshold. At this point (see Appendix \ref{app:boundary-conditionB}), we have
    \begin{eqnarray}
        f(x=-1)=\frac{4}{\epsilon^2}+\mathcal{O}\left(\epsilon^{-1}\right) \quad \text{and}\quad f'(x=-1)=\frac{2}{\epsilon}+\mathcal{O}\left(\epsilon^0\right).
    \end{eqnarray}
    Hence, under the gauge transformation, we have that the boundary condition in the new basis is $G(x=-1)=(4,0,0)^T/\epsilon^2+\mathcal{O}(\epsilon^{-1})$. Finally, transforming back to the original basis and restoring the physical variables, we get
    \begin{multline}\label{eq:sol-mlessbox}
        \Phi(s,t;4-2\epsilon)=\frac{(-s)^{-\epsilon}}{st}\Bigg\{\frac{4}{\epsilon^2}-\frac{2}{\epsilon}\log\left(\frac{t}{s}\right)+\frac{4}{\epsilon}+2\log\left(\frac{t}{s}\right)-\frac{5\pi^2}{3}\\+\epsilon\Bigg[2\operatorname{Li}_3\left(-\frac{s}{t}\right)+2\log\left(\frac{t}{s}\right)\operatorname{Li}_2\left(-\frac{s}{t}\right)+\log^3\left(\frac{t}{s}\right)\\
        -\left(\log^2\left(\frac{t}{s}\right)+\pi^2\right)\log\left(\frac{s+t}{s}\right)\\+\left(\frac{4\pi^2}{3}-2\right)\log\left(\frac{t}{s}\right)-4+\frac{5\pi^2}{3}-10\zeta(3)\Bigg]+\mathcal{O}\left(\epsilon^{2}\right)\Bigg\}.
    \end{multline}
    where $\operatorname{Li}_n$ is the weight-n classical polylogarithm and $\zeta(n)$ is the Riemann zeta constant.
    
    The result obtained above agrees with those reported in \cite{Henn:2014qga,Calisto:2023vmm}, up to an overall constant kinematics-independent normalization factor. In particular, if $F_4$ denotes the convention used in the mentioned references, then
    \begin{eqnarray}
        F_4=e^{\gamma_E\epsilon}\Gamma(2+\epsilon)\:\Phi
    \end{eqnarray}
    where $\gamma_E$ is the Euler--Mascheroni constant.
    
\subsection{Off-shell massless triangle and box integrals}

We now move to examples with co$(\mathcal{A})=2$. Starting with the off-shell massless triangle integral with all the powers of the propagators equal to one. The Symanzik polynomials take the form
\begin{eqnarray}
\widetilde{\mathcal{F}}(u_1,u_2)&=& su_1+uu_2+tu_1u_2,\\
\widetilde{\mathcal{U}}(u_1,u_2)&=&1+u_1+u_2,
\end{eqnarray}
and we used the following convention for the Mandelstam variables in the three-point case
\begin{eqnarray}
s&=&-(p_1+p_2)^2=-p_3^2,\\
u&=&-(p_1+p_3)^2=-p_2^2,\\
t&=&-(p_2+p_3)^2=-p_1^2.
\end{eqnarray}
Therefore, $w_1=w_2=w_3=1$ and $z_1=s, z_2=u,z_3=t$.

For this system there is no need for a reduction, due to the size of the polynomial, and the respective Cayley matrix and constant vector are
\begin{eqnarray}
A=
\begin{pmatrix}
1 & 1 & 1 & 0 & 0 & 0 \\
 0 & 0 & 0 & 1 & 1 & 1 \\
 0 & 1 & 0 & 1 & 0 & 1 \\
 0 & 0 & 1 & 0 & 1 & 1
\end{pmatrix},\qquad
\vec{\beta}=
\begin{pmatrix}
3-2\beta\\
\beta-3\\
-1\\
-1
\end{pmatrix}
\end{eqnarray}
Following the procedure, from the kernel of $A$, we arrive at the toric variables
\begin{eqnarray}
x&=&\frac{w_1z_3}{w_3z_1}=\frac{t}{s},\\
y&=&\frac{w_2z_2}{w_3z_1}=\frac{u}{s},
\end{eqnarray}
and the solution ansatz
\begin{eqnarray}
\Phi=\frac{w_1^{2-\beta } w_2^{2-\beta } z_1^{\beta -3}}{w_3}f(x,y).
\end{eqnarray}
The hypergeometric system for $f(x,y)$, from the toric equations, consist on the following system of equations 
\begin{eqnarray}\label{eq:hyp-tri}
(\beta -3)f+(\beta -5) y \partial_yf+(\beta  (x-1)-5 x+3)\partial_xf-y \left(y \partial^2_yf+2 x\partial^2_{xy}f\right)-(x-1) x \partial^2_xf&=&0,\notag\\
(\beta -3)f+(\beta -5) x\partial_xf+(-\beta +(\beta -5) y+3) \partial_yf-2 x y\partial^2_{xy}f-(y-1) y \partial^2_yf-x^2 \partial^2_xf&=&0.\notag\\
\end{eqnarray}
The holonomic rank of the system was computed using \textsc{Singular}, yielding a rank of four. This allow us to chose the Frobenius basis for two variables in the following way
\begin{eqnarray}\label{eq:basis-tri}
J=\begin{pmatrix}
f\\
x\partial_xf\\
y\partial_yf\\
xy\partial^2_{xy}f
\end{pmatrix}.
\end{eqnarray}
Using \eqref{eq:hyp-tri} we arrive at the respective system for the basis with the form
\begin{eqnarray}
\partial_x J &=&  P_xJ,\\
\partial_y J &=&  P_yJ,
\end{eqnarray}
where the $4\times4$ $ P_i$ matrices for the first order system have large expressions, but with the following structure
\begin{eqnarray}\label{eq:GM-tri}
 P_i= P_i^{(0)}(x,y)+\epsilon P_i^{(1)}(x,y),
\end{eqnarray}
and with the toric alphabet $\left\{x,y,x+y-1,(1-x-y)^2 - 4xy\right\}$. The explicit expressions for the matrices can be seen in Appendix \ref{app:explicitGM}.

From the previous alphabet we can immediately recognize the \textit{K{\"a}ll{\'e}n} function for the massless triangle integral, $\lambda(x,y) = (1-x-y)^2 - 4xy$. In order to rationalize $\sqrt{\lambda(x,y)}$ we introduce the change of variables
\begin{eqnarray}
x=\frac{u}{1+v},\qquad y=\frac{v(1-u)}{1+v}.
\end{eqnarray}

These new variables allow us to find a gauge transformation to a canonical basis as $J=TG$ (see also Appendix \ref{app:explicitGM}), and here we write the final result for the Pfaffian system in $\epsilon$-form $dG=\dd\Omega G$, with
\begin{eqnarray}
\dd\Omega = \epsilon\sum_{i=1}^5\Omega_i\dd \log(W_i)   
\end{eqnarray}
now the alphabet in the new variables and in the respective order is $\{W_1,W_2,W_3,W_4,W_5\}=\{u,v,u-1,v+1,uv+u-1\}$. For the coefficient matrices we have

\begin{eqnarray}
&&\Omega_1=
\begin{pmatrix}
-2 & 0 & -1 & 0 \\
 -2 & 0 & -2 & 1 \\
 2 & 0 & 1 & 0 \\
 2 & 0 & 2 & -1
\end{pmatrix},\qquad
\Omega_2=
\begin{pmatrix}
-2 & -1 & -2 & 0 \\
 -2 & -3 & -2 & -2 \\
 2 & 2 & 2 & 1 \\
 2 & 2 & 2 & 1
\end{pmatrix},\qquad
\Omega_3=
\begin{pmatrix}
0 & 1 & 2 & 0 \\
 0 & -1 & 2 & -2 \\
 0 & 0 & -2 & 1 \\
 0 & 0 & -2 & 1
\end{pmatrix},\notag
\\
&&\Omega_4=
\begin{pmatrix}
 0 & 1 & 1 & 0 \\
 0 & 3 & 2 & 1 \\
 0 & -2 & -1 & -1 \\
 0 & -2 & -2 & 0
\end{pmatrix},\qquad
\Omega_5=
\begin{pmatrix}
2 & 0 & 0 & 0 \\
 2 & 0 & 0 & 0 \\
 -2 & 0 & 0 & 0 \\
 -2 & 0 & 0 & 0
\end{pmatrix}\notag.\\
\end{eqnarray}

The boundary condition we take for our system is $f(x=0,y=0)=\tilde{f}(u=0,v=0)$, which falls into the well-known single scale triangle integral. All the information we need from it is, that at leading order, it give us a behavior like
\begin{eqnarray}
\tilde{f}(0,0)=\frac{1}{\epsilon^2}.
\end{eqnarray}
Under the gauge transformation, we have the boundary condition on the new basis as $(-1, -2, 2, 2)^\mathrm{T}/\epsilon^2$.
Now we integrate the system using the \textsc{PolyLogTools} package \cite{Duhr:2019tlz}, and write the answer compactly in terms of Goncharov's polylogarithms \cite{Goncharov:1998kja,Goncharov:2001iea}, i.e.
\begin{eqnarray}
G(a_1,\ldots,a_n;z)=\int_0^z \frac{dt}{t-a_1}\,G(a_2,\ldots,a_n;t),\qquad G(z)=1.
\end{eqnarray}

Back to our problem, we have the solution
\begin{eqnarray}
\tilde{f}(u,v)=\frac{1}{\sqrt{\lambda(u,v)}}\left(\frac{1}{\epsilon^2} + \frac{1}{\epsilon}\tilde{f}_{-1}(u,v) + \tilde{f}_0(u,v) \right)
\end{eqnarray}
where
\begin{eqnarray}\label{eq:mless-tri}
\tilde{f}_{-1}(u,v)&=& G(1;u)-G\left(\frac{1}{1+v};u\right),\notag\\
\tilde{f}_{0}(u,v)&=& 2 \Bigg(G(0,v) \left(G(1,u)-G\left(\frac{1}{v+1},u\right)\right)-2G\left(1,\frac{1}{v+1},u\right)-G\left(\frac{1}{v+1},0,u\right)\notag\\
&&-G\left(\frac{1}{v+1},1,u\right)+2G\left(\frac{1}{v+1},\frac{1}{v+1},u\right)+G(1,0,u)+G(1,1,u)\Bigg).\notag\\
\end{eqnarray}
Putting everything together, we finally have the solution for the integral 
\begin{eqnarray}
\Phi=s^{-1-\epsilon}\tilde{f}(u,v).
\end{eqnarray}

The massless off-shell box case is relevant here because it provides us with another good example of a reduction of a GKZ system and allows us to see its relation to the previous triangle integral as a good test. Due to the similarities, we will only highlight the important steps. The box integral in question has the structure
\begin{eqnarray}\label{eq:euler-box}
\Phi=\int_{u_i\geq0}\dd u_1\dd u_2\dd u_3\;\widetilde{\mathcal{U}}(u_1,u_2,u_3)^{4-2\beta}\widetilde{\mathcal{F}}(u_1,u_2,u_3)^{\beta-4}
\end{eqnarray}
We cast the Symanzik polynomials for the box integral in the $u_i$ variables
\begin{eqnarray}
\mathcal{F}(u_1,u_2,u_3)&=& p_4^2u_1+tu_2+p_1^2u_1u_2+p_3^2u_3+su_1u_3+p_2^2u_2u_3,\\
\mathcal{U}(u_1,u_2,u_3)&=&1+u_1+u_2+u_3,
\end{eqnarray}
where the Mandelstam variables $s$ and $t$ were defined in \eqref{eq:4p-mand}.

The original Cayley system for the integral \eqref{eq:euler-box} consists of a $10\times5$ matrix. We will not present it here and go directly to the system reduced by a facet. The respective $\mathcal{A}$-matrix for the reduced system and constant vector are
\begin{eqnarray}\label{eq:red-box}
A=
\begin{pmatrix}
1 & 1 & 1 & 1 & 1 & 1 \\
 1 & 0 & 1 & 0 & 1 & 0 \\
 0 & 1 & 1 & 0 & 0 & 1 \\
 0 & 0 & 0 & 1 & 1 & 1
\end{pmatrix},\qquad
\vec{\beta}=
\begin{pmatrix}
\beta-4\\
-1\\
-1\\
-1
\end{pmatrix}
\end{eqnarray}
For this system we have only six physical parameters denoted as $z_1=p_4^2,z_2=t,z_3=p_1^2,z_4=p_3^2,z_5=s,z_6=p_2^2$.

The kernel of the reduced $\mathcal{A}$-matrix in \eqref{eq:red-box} gives us a noticeable reduction to only two toric variables
\begin{eqnarray}
X&=&\frac{z_3z_4}{z_2z_5}=\frac{p_1^2p_3^2}{st},\\
Y&=&\frac{z_1z_6}{z_2z_5}=\frac{p_4^2p_2^2}{st},
\end{eqnarray}
and the proposed solution to \eqref{eq:euler-box} is
\begin{eqnarray}
\Phi=\frac{(z_1z_4)^{\beta-2} z_5^{1-\beta }}{z_2}f(X,Y).
\end{eqnarray}

By following the procedure we will arrive at a system with holonomic rank four. From this, the basis will be the same as that for the triangle \eqref{eq:basis-tri}, but with the renaming of the variables as $(x,y)\longrightarrow (X,Y)$.  The first order system will present a slight difference in the sign of the Pfaffian matrix from \eqref{eq:GM-tri}; it now takes the form 
\begin{eqnarray}
 P_i= P_i^{(0)}(X,Y)-\epsilon P_i^{(1)}(X,Y),
\end{eqnarray}
with the same alphabet for $(X,Y)$. Just to differentiate from the triangle case, to rationalize the alphabet, we will use the same change of variables with capital letters 
\begin{eqnarray}
X=\frac{U}{1+V},\qquad Y=\frac{V(1-U)}{1+V}.
\end{eqnarray}

The result of repeating the steps for the triangle integral takes us to the solution of the box integral. With the boundary condition being the on-shell massless box of \eqref{eq:sol-mlessbox}, the solution for $f$ reads
\begin{eqnarray}
\tilde{f}(U,V)=\frac{1}{\sqrt{\lambda(U,V)}}\left(-\frac{2}{\epsilon^2}  -4\frac{1}{\epsilon}\tilde{f}_{-1}(U,V) + 4\tilde{f}_0(U,V) \right)
\end{eqnarray}
where the terms of the expansion are the ones for the triangle in \eqref{eq:mless-tri}. Finally, the result for the integral is
\begin{eqnarray}
\Phi=\frac{(p_4^2p_3^2t)^{\epsilon}}{st}\tilde{f}(U,V),
\end{eqnarray}
and with this we recover the functional behavior for the massless off-shell triangle and box integral that was found originally in \cite{Usyukina:1992jd}. Again, an important remark for the box example is that the reduction in the size of basis to solve the system here is four, while in the DE approach the basis consists of eleven master integrals. 

\subsection{Sunset integral with one massless propagator}

As a final example, for another system of codimension two, we will present the $\epsilon$-form for the sunset integral with one massless propagator. The explicit and large expressions will be presented in Appendix \ref{app:explicitGM}. The integral has the following Symanzik polynomials
\begin{eqnarray}
\widetilde{\mathcal{F}}(u_1,u_2)&=& m_3^2 u_1+m_3^2u_2+(m_2^2+m_3^2+s)u_1 u_2 +m_2^2u_2^2+m_2^2 u_1u_2^2,\\
\widetilde{\mathcal{U}}(u_1,u_2)&=&u_1+u_2+u_1u_2,
\end{eqnarray}

Moving forward, the initial system has holonomic rank three, so the Frobenius basis is $J=(f,x\partial_xf,y\partial_yf)^{\mathrm{T}}$, where the toric variables are
\begin{eqnarray}
x&=&\frac{m_2^2 m_3^2}{(m_2^2+m_3^2+s)^2},\\
y&=&\frac{m_2^2}{(m_2^2+m_3^2+s)}.
\end{eqnarray}

The first order system has the expansion in $\epsilon$
\begin{eqnarray}
\partial_iJ=\left( P(x,y)^{(0)}_i+\epsilon P(x,y)^{(1)}_i+\epsilon^2 P(x,y)^{(2)}_i\right)J,
\end{eqnarray}
with the toric alphabet $\{x,y,4x-1,x-y+y^2\}$.

In order to find the canonical form, we introduce the change of variables
\begin{eqnarray}
x=\frac{uv}{v+v},\qquad y=\frac{(u+v)^2-u^3v}{u(2-u)(u+v)^2}.
\end{eqnarray}

Then, on the new variables, the gauge transformation is also quadratic in $\epsilon$
\begin{eqnarray}
T(u,v)=T(u,v)^{(0)}+\epsilon T(u,v)^{(1)}+\epsilon^2T(u,v)^{(2)},
\end{eqnarray}
and it leads to the first order system in $\epsilon$-form 
\begin{eqnarray}
\dd \Omega = \epsilon\sum_{i=1}^7\Omega_i\:\dd \log(W_i)\; G,
\end{eqnarray}
with respective alphabet in order is $\{W_1,W_2,W_3,W_4,W_5,W_6,W_7\}=\{u,v,u-2,u+v,u^2-u-v,uv-u-v,u^3v-u^2-2uv- v^2\}$. For the coefficient matrices we have
\begin{eqnarray}
&&\Omega_1=
\begin{pmatrix}
3 & -1 & 8 \\
 0 & -1 & 0 \\
 0 & -\frac{1}{4} & 1
\end{pmatrix},\qquad
\Omega_2=
\begin{pmatrix}
0 & 1 & -8 \\
 0 & 0 & 0 \\
 0 & -\frac{1}{4} & 2
\end{pmatrix},\qquad
\Omega_3=
\begin{pmatrix}
1 & 0 & 0 \\
 0 & -1 & 0 \\
 0 & -\frac{1}{4} & 1
\end{pmatrix},\notag
\\
&&\Omega_4=
\begin{pmatrix}
0 & 0 & 0 \\
 0 & -2 & 0 \\
 0 & -\frac{1}{4} & 0
\end{pmatrix},\qquad
\Omega_5=
\begin{pmatrix}
-2 & 0 & -8 \\
 0 & 0 & 0 \\
 0 & 0 & 0
\end{pmatrix}\qquad,
\Omega_6=
\begin{pmatrix}
0 & 0 & 8 \\
 0 & 0 & 0 \\
 0 & 0 & -2
\end{pmatrix},\notag\\
&&\Omega_7=
\begin{pmatrix}
0 & 0 & 0 \\
 0 & 1 & 0 \\
 0 & \frac{1}{4} & 0
\end{pmatrix},\notag\\
\end{eqnarray}

For this example, we also computed the number of master integrals using \textsc{LiteRed} \cite{Lee:2013mka} and found it to be four. This provides a further example in which the holonomic rank is strictly smaller than the number of IBP master integrals, reinforcing the central observation of this work. With this, we conclude our collection of examples demonstrating the application of the proposed approach.

\section{Conclusions}\label{sec:concl}

In this work, we have used the technology of $D$-modules to establish a direct connection between GKZ systems for Feynman integrals and canonical differential equations. While the LP representation has so far provided the natural setting for connecting Feynman integrals with GKZ hypergeometric systems, we have shown here that the generalized Euler integral arising from the Schwinger representation provides an alternative and particularly useful starting point. In this formulation, reductions associated with facets of the Newton polytope can be identified more directly, leading to a reduction in the number of variables and making the GKZ framework especially well suited to the study of Feynman integrals. We have exploited this feature in the construction of the system of differential equations.

Starting from the GKZ hypergeometric system, the associated Pfaffian system was obtained by constructing a Frobenius basis. This procedure also provides a direct identification of the singular locus of the Feynman integral. To bring the resulting system into canonical $\epsilon$-form, an additional change to rationalizing variables is required. An important feature of the construction is that the dimension of the canonical basis is determined by the holonomic rank of the GKZ hypergeometric system. In the examples considered here, this dimension is smaller than or equal to the number of master integrals obtained from the corresponding IBP reduction. This provides an indication that the $D$-module/GKZ framework can offer a more economical description of the space of functions associated with a Feynman integral, while providing a direct route from its integral representation to its canonical differential equation.

A work in progress concerns the construction of different bases from the Frobenius ones, following a new approach and avoiding the need of a rationalizing change of variables. Other interesting avenues include studying the vector space generated by the generalized Euler integrals arising from the Schwinger representation, using the constructions of twisted De Rham cohomology \cite{Agostini:2022cgv} and their possible motivic extensions \cite{Marcolli:2009zy}. Concurrently, efforts are focused on the calculation of one-loop integrals with higher multiplicity as well as multi-loop cases.

The polylogarithmic examples analyzed in this work provide a first setting in which to explore the connection between GKZ systems and canonical differential equations from the generalized Euler representation. A natural and immediate direction for future work is to extend this analysis to elliptic Feynman integrals. In this case, the corresponding $D$-module structure and the associated GKZ systems are expected to provide a natural framework for investigating how the present construction extends beyond the polylogarithmic regime, in particular in relation to the appearance of elliptic periods and the differential equations governing them \cite{Adams:2018yfj,Frellesvig:2021hkr,Dlapa:2022wdu,Gorges:2023zgv}.

\section*{Acknowledgments}

We are thankful to Leonardo de la Cruz, William Torres Bobadilla and Pierre Vanhove for valuable comments and important questions in the final version of the draft.




\appendix

\section{Proof of the Pfaffian reduction lemma}\label{app:A}
In this appendix, we provide the explicit proof of the claim made at the end of Section \ref{sec:2.2} regarding the finite-dimensional vector space structure of the quotient ring $\tilde{R}_n / I_{\mathcal{A}}(\beta)$ after performing the separation of variables. Throughout this appendix, we strictly adopt the notation and conventions established in Subsection \ref{sec:2.2}. The concrete statement of the result is as follows.

\begin{lemma}
The quotient $\tilde{R}_n / I_{\mathcal{A}}(\beta)$ is a finite-dimensional vector space over the field $\CC(x_1,\dots,x_n)$, whose dimension coincides with the holonomic rank of the original GKZ hypergeometric ideal $H_{\mathcal{A}}(\beta)$.
\end{lemma}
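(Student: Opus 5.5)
The plan is to realize the passage from $H_{\mathcal{A}}(\beta)$ to $I_{\mathcal{A}}(\beta)$ as a change of coordinates on the torus. Under this change the Euler operators become the coordinate vector fields of the orbit directions, and the reduction is then an application of a standard fact: a torus-equivariant $D$-module restricts to a transversal slice with the same rank.

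\textbf{Step 1: split coordinates.} Since $A$ has maximal rank and $L=\ker A$ has rank $n=N-m$, I complete the toric monomials $x_1,\dots,x_n$ by $m$ further Laurent monomials $t_1,\dots,t_m$ in $z$. The coordinates $(x,t)$ then give a birational, in fact étale after a finite cover, change of coordinates on $(\CC^\times)^N$. With these choices the operators $z_j\partial_j$ become
\begin{equation}
z_j\partial_j=\sum_{i=1}^n l_{ij}\theta_i+\sum_{k=1}^m a_{kj}\,\tau_k,
\end{equation}
where $\tau_k=t_k\partial_{t_k}$ and the $t_k$ are chosen dual to the rows of $A$.

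\textbf{Step 2: strip off the Euler equations.} I conjugate by the monomial $z^\alpha$ with $A\alpha=\beta$. This sends $Z_i$ to $\tau_i$, so the field extension and the conjugation identify
\begin{equation}
R_N/H_{\mathcal{A}}(\beta)\otimes \CC(z)
\end{equation}
with a module over $\CC(x,t)\langle\theta,\tau\rangle$ in which every $\tau_k$ acts by zero. Each box operator $\Box_l$ is a $z^{l_+}$-multiple of a polynomial in the $z_j\partial_j$. After this multiplication, which is a unit over the field $\CC(z)$, and after setting $\tau=0$, it becomes exactly the transformed box operator generating $I_{\mathcal{A}}(\beta)$. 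Its coefficients depend only on $x$, because the monomial prefactor $z^{-l_+}$ can be absorbed into the ratio $x^{l}$.

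\textbf{Step 3: compare dimensions.} I consider the module $M=\CC(x,t)\langle\theta,\tau\rangle/(\tau_1,\dots,\tau_m,\,I_{\mathcal{A}}(\beta))$. A $\CC(x,t)$-basis of $M$ is given by the monomials $\theta^\gamma$ forming a $\CC(x)$-basis of $\tilde{R}_n/I_{\mathcal{A}}(\beta)$, by flat base change $\CC(x)\subset\CC(x,t)$. Hence
\begin{equation}
\dim_{\CC(x,t)}M=\dim_{\CC(x)}\tilde{R}_n/I_{\mathcal{A}}(\beta).
\end{equation}
On the other hand, $\dim_{\CC(z)}R_N/H_{\mathcal{A}}(\beta)$ is by definition the holonomic rank, and Steps 1--2 show that this equals $\dim M$, since $\CC(z)=\CC(x,t)$ up to a finite extension and the dimension is unchanged under such an extension. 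Finiteness follows from the finiteness of the holonomic rank of GKZ systems (the normalized-volume statement recalled in Section~\ref{sec:2.2}, or holonomicity in general).

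\textbf{Main obstacle.} The delicate point is Step 2: showing that $H_{\mathcal{A}}(\beta)R_N(z)$ maps onto precisely the ideal generated by $\tau_k$ and the transformed boxes. Two things must be checked.

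First, the left ideal generated by $\tau_k$ and $I_{\mathcal{A}}(\beta)$ is closed in the right way. This holds because $\tau_k$ commutes with $\theta_i$ and kills functions of $x$ alone, so $\tau_k$ commutes with every element of $\tilde{R}_n$. Consequently the sum of the two left ideals is a two-sided combination and no new relations arise.

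Second, the finite index of the lattice generated by $\{l_i\}$ and the rows of $A$ inside $\ZZ^N$ could make the coordinate change only a finite cover. Here I would pass to the algebraic closure, or argue that the rank is invariant under finite field extensions. If the chosen basis of $L$ is not saturated, I would instead use a $\ZZ$-basis of $L$, which does not affect the ideal generated by boxes over the rational function field.
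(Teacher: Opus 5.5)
Your proposal is correct and follows essentially the same route as the paper. Both arguments conjugate by $z^{\alpha}$ so that the Euler operators become pure logarithmic vector fields, split the logarithmic derivations into these Euler directions and the toric directions $\theta_i$, quotient by the Euler part, observe that the transformed box operators have coefficients in $\CC(x_1,\dots,x_n)$, and conclude that $R_N/z^{-\alpha}H_{\mathcal{A}}(\beta)z^{\alpha}\cong\CC(z)\otimes_{\CC(x)}\bigl(\tilde{R}_n/I_{\mathcal{A}}(\beta)\bigr)$.

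The only difference is that you realize the splitting through explicit complementary monomial coordinates $t_k$, whereas the paper changes the basis of derivations. Your version makes the commutation $[\tau_k,\tilde{R}_n]=0$ and the step ``set $\tau=0$'' more transparent than the paper's appeal to torus invariance. Your finite-cover worry disappears if you take $t_k=z^{s_k}$ with $s_k\in\ZZ^N$ and $As_k=e_k$, which is possible because $\mathcal{A}$ generates $\ZZ^m$. That choice makes the coordinate change unimodular and gives $Z_k\mapsto\tau_k$ exactly. With $t_k$ built from the rows of $A$, as in your Step 1 formula, you get $AA^{T}\tau$ instead, which is harmless since $AA^{T}$ is invertible.
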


\begin{proof}
Let $\tau \colon R_N \to R_N$ be the $\CC(z_1,\dots,z_N)$-algebra automorphism defined by $\tau(\partial_j) = \partial_j + \alpha_j z_j^{-1}$ for $j = 1, \dots, N$. Since \(A\alpha = \beta\), under such automorphism, the first-order differential operators \(Z_{i}\) are mapped to the first-order differential operators $\tau(Z_i) = \sum_{j=1}^N a_{ij} z_j \partial_j$. The image ideal $\tau(H_{\mathcal{A}}(\beta))$ is thus generated by these first-order differential operators together with the transformed box operators $\tau(\Box_l)$.

To analyze the quotient by these first-order differential operators, consider the vector space over $\CC(z_1,\dots,z_N)$ of logarithmic derivations. The standard basis is given by $\{z_1\partial_1, \dots, z_N\partial_N\}$. On the other hand, by the chain rule and the definition of the toric variables, the operators $\theta_j$ can be written as $\theta_{j}=\sum _{k=1}^{N}l_{jk}z_{k}\partial_{k}$. Thus, the coefficients of the operators $\tau(Z_1),\dots,\tau(Z_m)$ are given by the rows of $A$, while those of $\theta_1\dots,\theta_n$ are given by the basis vectors $l_1,\dots,l_n$ of $L$. Since these $N$ vectors of coefficients form a basis of $\CC^N$, the corresponding operators $\tau(Z_1), \dots, \tau(Z_m), \theta_1, \dots, \theta_n$ form an alternative basis of the space of logarithmic derivations. Moreover, because they are linear combinations of the commuting operators $z_j \partial_j$, they commute with one another, and hence their ordered monomials form a basis for $R_{N}$ over $\CC(z_1,\dots,z_N)$. From this, modulo the left ideal generated by $\tau(Z_1), \dots, \tau(Z_m)$, every differential operator has a unique representative which is a linear combination of monomials in $\theta_1, \dots, \theta_n$. Therefore, after quotienting $R_{N}$ by the first-order generators of $\tau(H_{\mathcal{A}}(\beta))$, the quotient is identified with the algebra generated by $\theta_1, \dots, \theta_n$, with coefficients in $\CC(z_1,\dots,z_N)$.

We now have to impose the remaining relations coming from $\tau(H_{\mathcal{A}}(\beta))$, namely those defined by the transformed box operators $\tau(\Box_l)$ for $l \in L$. Because each original box operator $\Box _{l}$ is homogeneous of degree zero with respect to the algebraic torus action defined by the matrix $A$, its image under the map $\tau$ remains invariant under this action. Consequently, the transform $\tau(\Box_l)$ preserves the quotient by the first-order operators $\tau(Z_i)$. Under the identification established above, this invariance ensures that the coefficients of the induced operator, which a priori lie in $\CC(z_1,\dots,z_N)$, belong to the subfield of invariant rational functions $\CC(x_1,\dots,x_n)$. Thus, the induced operator is expressed entirely in terms of the toric variables $x_1, \dots, x_n$ and the operators $\theta_1, \dots, \theta_n$. By the construction of the ideal $I_{\mathcal{A}}(\beta)$, these induced operators are precisely its generators. Consequently, imposing the relations coming from the transformed box operators on the quotient by the operators \(\tau(Z_i)\) yields the canonical isomorphism of left modules 
$$
R_N / \tau(H_{\mathcal{A}}(\beta)) \cong \CC(z_1, \dots, z_N) \otimes_{\CC(x_1, \dots, x_n)} ( \tilde{R}_n / I_{\mathcal{A}}(\beta) ).
$$
The dimension of the left-hand side over $\CC(z_1, \dots, z_N)$ is therefore equal to the dimension of $\tilde{R}_n / I_{\mathcal{A}}(\beta)$ over $\CC(x_1, \dots, x_n)$. Since $\tau$ is an automorphism, the former is precisely the holonomic rank of $H_{\mathcal{A}}(\beta)$, establishing the desired equality of dimensions.
\end{proof}

\section{Boundary conditions at the pseudo-threshold }\label{app:bund-cond}

\subsection{Two-mass bubble}
\label{app:boundary-conditionA}

The boundary singular regular point $y=1$ corresponds, under the change of variables \eqref{eq:changeVar}, to $x=1/4$. Since
\begin{eqnarray}
    x=\frac{z_1z_3}{z_2^2},
\end{eqnarray}
we choose the branch
\begin{eqnarray}
    z_2=2\sqrt{z_1z_3}.
    \label{eq:pseudothreshold-condition}
\end{eqnarray}
For \(\nu_1=\nu_2=1\) and \(\beta=1-\epsilon\), the generalized Euler
representation of the GKZ solution is
\begin{eqnarray}
    \Phi(z_1,z_2,z_3)
    =\int_0^\infty
    \,du\frac{(1+u)^{2\epsilon}}
    {\left(z_1+z_2u+z_3u^2\right)^{1+\epsilon}},
\end{eqnarray}
while
\begin{eqnarray}
    f(x)=z_1^\epsilon z_2\,\Phi(z_1,z_2,z_3).
\end{eqnarray}
On the locus \eqref{eq:pseudothreshold-condition}, the quadratic polynomial
degenerates into a perfect square,
\begin{eqnarray}
    z_1+z_2u+z_3u^2
    =
    \left(\sqrt{z_1}+\sqrt{z_3}\,u\right)^2.
\end{eqnarray}
Consequently,
\begin{align}
    \left.\Phi\right|_{x=1/4}
    &=
    \int_0^\infty
    \,du\frac{(1+u)^{2\epsilon}}
    {\left(\sqrt{z_1}+\sqrt{z_3}\right)^{2+2\epsilon}}\notag\\
    &=
    \frac{1}{1+2\epsilon}
    \frac{
        z_1^{-(1+2\epsilon)/2}
        -
        z_3^{-(1+2\epsilon)/2}
    }{
        \sqrt{z_3}-\sqrt{z_1}
    },
    \label{eq:Phi-boundary}
\end{align}
where the integral has been evaluated by the substitution $t=(\sqrt{z_1}+\sqrt{z_3}u)/(1+u)$. Hence, at \(x=1/4\), the transformation to the canonical basis $G(y,\epsilon)$ gives
\begin{equation}
    G_1(x=1/4)
    =\left.-2\epsilon\sqrt{1-4x}\;f\left(x\right)\right|_{x=1/4}=0,
\end{equation}
and,
\begin{align}
    G_2(x=1/4)&=\left.-2\left[4x+\epsilon\left(2+3\sqrt{1-4x}\right)\right]f(x)+4\epsilon(1-4x)xf_x(x)\right|_{x=1/4}\notag\\
    &=-2(1+2\epsilon)f(1/4)\notag\\
    &=-2(1+2\epsilon)z_1^\epsilon z_2\,\left.\Phi\right|_{x=1/4}.
\end{align}
Using \eqref{eq:Phi-boundary} and \(z_2=2\sqrt{z_1z_3}\), one obtains
\begin{equation}
    G(y=1;\epsilon)
    =
    \begin{pmatrix}
        0\\
        \xi(\epsilon)
    \end{pmatrix},\quad \text{where}
    \quad
    \xi(\epsilon)
    =
    -4\left(
    \frac{1-r^{1+2\epsilon}}{1-r}\right), \quad \text{and}
    \quad
    r=\sqrt{\frac{z_1}{z_3}}.
    \label{eq:boundary-xi}
\end{equation}

\subsection{On-shell massless box}
\label{app:boundary-conditionB}
From \eqref{eq:ansatz-sol} and taking $\beta=2-\epsilon$, we have that
    \begin{eqnarray}
            f(x)=\frac{z_1^{1+\epsilon}z_2}{(w_2w_4)^\epsilon}\Phi(z_1,z_2).
    \end{eqnarray}
At the pseudothreshold $x=-1$, which corresponds to $t=-s$ in terms of the physical kinematic variables, the boundary value becomes
\begin{eqnarray}\label{eq:evalf}
    f(x=-1)=\left.\frac{z_1^{1+\epsilon}z_2}{(w_2w_4)^\epsilon}\Phi(z_1,z_2)\right|_{w_2=w_4=1,z_1=-z2=-s}.
\end{eqnarray}
In order to evaluate the above expression, first, we can calculate $\Phi(z_1=-z_2=-s)\equiv\left.\Phi\right|_{x=-1}$. Notice that,
\begin{align}
    \Phi&=\int_{u_i\geq0}\dd u_1\dd u_2 \dd u_3\;(1+u_1+u_2+u_3)^{2\epsilon}(-su_1u_3-tu_2)^{-2-\epsilon}\\
    &=\int_{0}^{1}\dd\alpha_1\dd\alpha_2\dd\alpha_3\dd\alpha_4\:\delta\left(1-\sum_{i=1}^{4}\alpha_i\right)(-s\alpha_1\alpha_3-t\alpha_2\alpha_4)^{-2-\epsilon}
\end{align}
under the change of variables given by
\begin{eqnarray}
     \alpha_4=\frac{1}{1+u_1+u_2+u_3}\quad \text{and} \quad \alpha_i=u_i\alpha_4,\quad \text{for}\quad i=1,2,3.
\end{eqnarray}
Now, taking $t=-s$, we have that
\begin{eqnarray}
    \left.\Phi\right|_{x=-1}=\int_{0}^{1}\dd\alpha_1\dd\alpha_2\dd\alpha_3\dd\alpha_4\:\delta\left(1-\sum_{i=1}^{4}\alpha_i\right)(-s\alpha_1\alpha_3+s\alpha_2\alpha_4-i0)^{-2-\epsilon}.
\end{eqnarray}
Assuming $s>0$, we introduce a further change of variables,
\begin{eqnarray}\label{eq:newcoord}
    \alpha_1=\lambda z,\;\alpha_2=\lambda(1-z),\;\alpha_3=(1-\lambda)\omega\;,\alpha_4=(1-\lambda)(1-\omega), \quad \lambda,z,\omega \in[0,1].
\end{eqnarray}
The integral then factorizes as
\begin{align}
    \left.\Phi\right|_{x=-1}&=s^{-2-\epsilon}\int_{0}^{1}\dd\lambda\:\left[\lambda(1-\lambda)\right]^{-1-\epsilon}\int_{0}^{1}\dd z\int_{0}^{1}\dd \omega\:(1-z-w-i0)^{-2-\epsilon}\notag\\
    &=\frac{-2}{\epsilon^2 s^2}\frac{\Gamma(1-\epsilon)^2}{(1+\epsilon)\Gamma(1-2\epsilon)}\left[(-s-i0)^{-\epsilon}+(s-i0)^{-\epsilon}\right]\notag\\
    &=\frac{-2\: (-s)^{-\epsilon}}{\epsilon^2 s^2}\frac{\Gamma(1-\epsilon)^2}{(1+\epsilon)\Gamma(1-2\epsilon)}\left[1+(-1-i0)^\epsilon\right].
\end{align}
Thus, from \eqref{eq:evalf}, we finally obtain that,
\begin{eqnarray}
    f(x=-1)=\frac{2}{\epsilon^2(1+\epsilon)}\frac{\Gamma(1-\epsilon)^2}{\Gamma(1-2\epsilon)}\left[1+(-1-i0)^\epsilon\right]=\frac{4}{\epsilon^2}+\mathcal{O}\left(\epsilon^{-1}\right).
\end{eqnarray}
Now, taking that $x=\frac{w_2w_4z_2}{w_1w_3z_1}=\frac{t}{s}$ and using \eqref{eq:evalf}, we can write
\begin{eqnarray}
    f(x)=x(-1-i0)^{\epsilon}\mathcal{I}(x)
\end{eqnarray}
where
\begin{eqnarray}
    \mathcal{I}(x)=\int_{0}^{1}\dd\alpha_1\dd\alpha_2\dd\alpha_3\dd\alpha_4\:\delta\left(1-\sum_{i=1}^{4}\alpha_i\right)(-\alpha_1\alpha_3-x\alpha_2\alpha_4-i0)^{-2-\epsilon}.
\end{eqnarray}
Hence,
\begin{eqnarray}
     f'(x)&=(-1-i0)^{\epsilon}\left[\mathcal{I}(x)+x\mathcal{I}'(x)\right].
\end{eqnarray}
Using the same change of variables introduced in \eqref{eq:newcoord}, we can finally evaluate the above expression at $x=-1$ and get 
\begin{equation}
    f'(x=-1)=\frac{2}{\epsilon(1-\epsilon^2)}\frac{\Gamma(1-\epsilon)^2}{\Gamma(1-2\epsilon)}\left[1-\epsilon(-1-i0)^\epsilon\right]=\frac{2}{\epsilon}+2+\mathcal{O}\left(\epsilon^{1}\right).
\end{equation}

\section{Explicit expressions for the Pfaffian matrices}\label{app:explicitGM}

For completeness, we will provide the explicit expressions for the matrices that appear in the examples.

\subsection{Triangle integral}
We start by presenting the components of the Pfaffian matrices in the original toric variables. For $P_x^{(0)}(x,y)$:
\begin{eqnarray}
&&P_{x11}^{(0)}=0 ,\quad P_{x12}^{(0)}=\frac{1}{x} ,\quad P_{x13}^{(0)}=0 ,\quad P_{x14}^{(0)}=0 ,\quad P_{x21}^{(0)}=-\frac{1}{x+y-1}\notag\\
&&P_{x22}^{(0)}=-\frac{2}{x+y-1} ,\quad P_{x23}^{(0)}=-\frac{2}{x+y-1} ,\quad P_{x24}^{(0)}=-\frac{2}{x+y-1} ,\quad P_{x31}^{(0)}=0 ,\quad P_{x32}^{(0)}=0\notag\\
&&P_{x33}^{(0)}=0 ,\quad P_{x34}^{(0)}=\frac{1}{x} ,\quad P_{x41}^{(0)}=-\frac{3 y (x-y+1)}{(x+y-1) \left(x^2-2 x (y+1)+(y-1)^2\right)}\notag\\
&&P_{x42}^{(0)}=\frac{2y(4(y-1)-2x)}{(x+y-1)\left(x^2-2x(y+1)+(y-1)^2\right)} ,\notag\\
&&P_{x34}^{(0)}=-\frac{x^2+2 x (4 y-1)+(-5 y-1) (y-1)}{(x+y-1) \left(x^2-2 x (y+1)+(y-1)^2\right)} ,\notag\\
&&P_{x44}^{(0)}=\frac{-2 x^3+x^2 (4-6 y)+x (y-1) (8 y+2)}{x (x+y-1) \left(x^2-2 x (y+1)+(y-1)^2\right)} ,\notag
\end{eqnarray}
next $P_x^{(1)}(x,y)$
\begin{eqnarray}
&&P_{x11}^{(1)}=0 ,\quad P_{x12}^{(1)}=0 ,\quad P_{x13}^{(1)}=0 ,\quad P_{x14}^{(1)}=0, \quad P_{x21}^{(1)}=-\frac{1}{x+y-1} ,\notag\\
&&P_{x22}^{(1)}=\frac{-x-y+1}{x (x+y-1)} ,\quad P_{x23}^{(1)}=0 ,\quad P_{x24}^{(1)}=0 ,\quad P_{x31}^{(1)}=0 ,\quad P_{x32}^{(1)}=0 ,\notag\\
&&P_{x33}^{(1)}=0 ,\quad P_{x34}^{(1)}=0 ,\quad P_{x41}^{(1)}=-\frac{3 y (x-y+1)}{(x+y-1) \left(x^2-2 x (y+1)+(y-1)^2\right)} ,\notag\\
&&P_{x42}^{(1)}=\frac{2 y (-x-y+1)}{(x+y-1) \left(x^2-2 x
   (y+1)+(y-1)^2\right)} ,\notag\\
&&P_{x34}^{(1)}=-\frac{-x^2-2 x y+2 x-y^2+2 y-1}{(x+y-1) \left(x^2-2 x (y+1)+(y-1)^2\right)} ,\notag\\
&&P_{x44}^{(1)}=\frac{x^3+x^2 y-x^2-x y^2+2 x y-x-y^3+3 y^2-3 y+1}{x (x+y-1) \left(x^2-2 x (y+1)+(y-1)^2\right)} ,\notag
\end{eqnarray}
following $P_y^{(0)}(x,y)$
\begin{eqnarray}
&&P_{y11}^{(0)}=0 ,\quad P_{y12}^{(0)}=0 ,\quad P_{y13}^{(0)}=\frac{1}{y} ,\quad P_{y14}^{(0)}=0 \quad P_{y21}^{(0)}=0 ,\quad P_{y22}^{(0)}=0 ,\notag\\
&&P_{y23}^{(0)}=0 ,\quad P_{y24}^{(0)}=\frac{1}{y} ,\quad P_{y31}^{(0)}=-\frac{1}{x+y-1} ,\quad P_{y32}^{(0)}=-\frac{2}{x+y-1} ,\notag\\
&&P_{y33}^{(0)}=-\frac{2}{x+y-1} ,\quad P_{y34}^{(0)}=-\frac{2}{x+y-1} ,\notag\\
&&P_{y41}^{(0)}=\frac{3 x (x-y-1)}{(x+y-1) \left(x^2-2 x (y+1)+(y-1)^2\right)} ,\notag\\
&&P_{y42}^{(0)}=-\frac{-5x^2+2x(4y+2)+(y-1)^2}{(x+y-1)\left(x^2-2x(y+1)+(y-1)^2\right)} ,\notag\\
&&P_{y34}^{(0)}=\frac{2 x (4 x-2 y-4)}{(x+y-1) \left(x^2-2 x (y+1)+(y-1)^2\right)} ,\notag\\
&&P_{y44}^{(0)}=\frac{8 x^2 y-x \left(6 y^2+6 y\right)-2 (y-1)^2 y}{y (x+y-1) \left(x^2-2 x (y+1)+(y-1)^2\right)} ,\notag
\end{eqnarray}
finally $P_y^{(1)}(x,y)$
\begin{eqnarray}
&&P_{y11}^{(2)}=0 ,\quad P_{y12}^{(2)}=0 ,\quad P_{y13}^{(2)}=0 ,\quad P_{y14}^{(2)}=0 ,\quad P_{y21}^{(2)}=0 ,\quad P_{y22}^{(2)}=0 ,\notag\\
&&P_{y23}^{(2)}=0 ,\quad P_{y24}^{(2)}=0 ,\quad P_{y31}^{(2)}=-\frac{1}{x+y-1} ,\quad P_{y32}^{(2)}=0 \quad, P_{y33}^{(2)}=\frac{-x-y+1}{y (x+y-1)} ,\notag\\
&&P_{y34}^{(2)}=0 ,\quad P_{y41}^{(2)}=\frac{3 x (x-y-1)}{(x+y-1) \left(x^2-2 x (y+1)+(y-1)^2\right)} ,\notag\\
&&P_{y42}^{(2)}=-\frac{-x^2-2 x y+2 x-y^2+2 y-1}{(x+y-1)\left(x^2-2x(y+1)+(y-1)^2\right)} ,\notag\\
&&P_{y34}^{(2)}=\frac{2 x (-x-y+1)}{(x+y-1) \left(x^2-2 x (y+1)+(y-1)^2\right)} ,\notag\\
&&P_{y44}^{(2)}=\frac{-x^3-x^2 y+3 x^2+x y^2+2 x y-3 x+y^3-y^2-y+1}{y(x+y-1)\left(x^2-2x(y+1)+(y-1)^2\right)} .\notag
\end{eqnarray}

The components for the gauge transformation matrices will be written in components due to their sizes. Starting with $T^{(0)}(u,v)$:
\begin{eqnarray}
&&T_{11}^{(0)}=\frac{v+1}{u v+u-1} ,\quad T_{12}^{(0)}=0 ,\quad T_{13}^{(0)}=0 ,\quad T_{14}^{(0)}=0 ,\notag\\
&&T_{21}^{(0)}=-\frac{u (v+1) ((u-2) v+u-1)}{(u v+u-1)^3} ,\quad T_{22}^{(0)}=0 ,\quad T_{23}^{(0)}=0 ,\quad T_{24}^{(0)}=0 ,\notag\\
&&T_{31}^{(0)}=-\frac{(u-1) v (v+1) (u v+u+1)}{(u v+u-1)^3} ,\quad T_{32}^{(0)}=0 ,\quad T_{33}^{(0)}=0 ,\quad T_{34}^{(0)}=0 ,\notag\\
&&T_{41}^{(0)}=\frac{2 (u-1) u v (v+1) (u (v+1) ((u-3) v+u+1)-3 v-2)}{(u v+u-1)^5} ,\quad T_{42}^{(0)}=0 ,\quad T_{43}^{(0)}=0 ,\notag\\
&&T_{44}^{(0)}=0 ,\notag
\end{eqnarray}
then $T^{(1)}(u,v)$:
\begin{eqnarray}
&&T_{11}^{(1)}=0 ,\quad T_{12}^{(1)}=0 ,\quad T_{13}^{(1)}=0 ,\quad T_{14}^{(1)}=0 ,\notag\\
&&T_{21}^{(1)}=\frac{2 (u-1) (v+1) (u v+u+1)}{(u v+u-1)^3} ,\quad T_{22}^{(1)}=\frac{2 u (v+1)}{(u v+u-1)^2} ,\quad T_{23}^{(1)}=\frac{(v+1) (u (v+3)+1)}{(uv+u-1)^2} ,\notag\\
&&T_{24}^{(1)}=0 ,\quad T_{31}^{(1)}= -\frac{2 (u-1) (v+1) ((u-2) v+u-1)}{(u v+u-1)^3} ,\quad T_{32}^{(1)}=\frac{u \left(v^2-1\right)+v+1}{(u v+u-1)^2} ,\notag\\
&&T_{33}^{(1)}=-\frac{2(v+1) (u-v-1)}{(u v+u-1)^2} ,\quad T_{34}^{(1)}=0 ,\notag\\
&&T_{41}^{(1)}=-\frac{2 (u-1) (v+1) \left(u^2 (2 u-3) v^3+3 u (u (u+2)-4) v^2+(u-1) (11 u+1) v-(u-1)^2 u\right)}{(u v+u-1)^5} \notag\\
&&T_{42}^{(1)}= -\frac{u (v+1) \left(u \left(u (v+1) (5 v-1)-6 v^2+2 v+2\right)-6 v-1\right)}{(u v+u-1)^4},\notag\\
&&T_{43}^{(1)}=-\frac{(v+1)
   \left(u \left(u^2 (v+1) (v (v+5)-2)+u (v (13-(v-2) v)+4)-5 v (2 v+3)-2\right)-v\right)}{(u v+u-1)^4} ,\notag\\
&&T_{44}^{(1)}=0 ,\notag
\end{eqnarray}
finally $T^{(2)}(u,v)$:
\begin{eqnarray}
&&T_{11}^{(2)}=0 ,\quad T_{12}^{(2)}=0 ,\quad T_{13}^{(2)}=0 ,\quad T_{14}^{(2)}=0 ,\notag\\
&&T_{21}^{(2)}=0 ,\quad T_{22}^{(2)}=0 ,\quad T_{23}^{(2)}=0 ,\quad T_{24}^{(2)}=0 ,\notag\\
&&T_{31}^{(2)}=0 ,\quad T_{32}^{(2)}=0 ,\quad T_{33}^{(2)}=0 ,\quad T_{34}^{(2)}=0 ,\notag\\
&&T_{41}^{(2)}=\frac{2 (u-1) (v+1) (u (v-1)+1) ((u-2) v+u-1) (u v+u+1)}{(u v+u-1)^5} ,\notag\\
&&T_{42}^{(2)}=\frac{2 u (v+1) (u (v-1)+1) ((u-2)v+u-1)}{(u v+u-1)^4} ,\notag\\
&&T_{43}^{(2)}=\frac{2 (v+1) (u (v-1)+1) (u-v-1) (u v+u+1)}{(u v+u-1)^4} ,\quad T_{44}^{(2)}=\frac{u\left(v^2-1\right)+v+1}{(u v+u-1)^2} .\notag
\end{eqnarray}

\subsection{Sunset integral}
Pfaffian matrices in the original toric variables
\begin{eqnarray}
 P^{(0)}_x(x,y)&=&
\begin{pmatrix}
 0 & \frac{1}{x} & 0 \\
 -\frac{2 x+2 y^2-2 y}{(4 x-1) (x+(y-1) y)} & \frac{x \left(6 y-6 y^2\right)-6 x^2}{x (4 x-1) (x+(y-1) y)} &
   -\frac{(y-1) (2 y-1)}{(4 x-1) (x+(y-1) y)} \\
 0 & 0 & -\frac{1}{x+(y-1) y}
\end{pmatrix},\notag\\
 P^{(1)}_x(x,y)&=&
\begin{pmatrix}
 0 & 0 & 0 \\
 \frac{2 x y+3 x+4 y^2-4 y}{(4 x-1) \left(x+y^2-y\right)} & \frac{x y+x+2 y^2-2 y}{x \left(x+y^2-y\right)} &
   \frac{x (2 y+1)+2 (y-1) y}{(4 x-1) (x+(y-1) y)} \\
 -\frac{y}{x+(y-1) y} & -\frac{y (2 x+y-1)}{x (x+(y-1) y)} & -\frac{y}{x+(y-1) y}
\end{pmatrix},\notag\\
 P^{(2)}_x(x,y)&=&
\begin{pmatrix}
0 & 0 & 0 \\
 \frac{-2 (x-1) y-x-2 y^2}{(4 x-1) (x+(y-1) y)} & 0 & 0 \\
 \frac{y}{x+(y-1) y} & 0 & 0 
\end{pmatrix},\notag\\
 P^{(0)}_y(x,y)&=&
\begin{pmatrix}
0 & 0 & \frac{1}{y} \\
 0 & 0 & -\frac{x}{y (x+(y-1) y)} \\
 0 & 0 & \frac{x-y^2}{y (x+(y-1) y)}
\end{pmatrix},\notag\\
 P^{(1)}_y(x,y)&=&
\begin{pmatrix}
0 & 0 & 0 \\
 -\frac{x}{x+(y-1) y} & \frac{-2 x-y+1}{x+(y-1) y} & -\frac{x}{x+(y-1) y} \\
 \frac{2 x-y}{x+(y-1) y} & \frac{4 x-1}{x+(y-1) y} & \frac{2 x y+x-y}{y (x+(y-1) y)} 
\end{pmatrix},\notag\\
 P^{(2)}_y(x,y)&=&
\begin{pmatrix}
0 & 0 & 0 \\
 \frac{x}{x+(y-1) y} & 0 & 0 \\
 \frac{y-2 x}{x+(y-1) y} & 0 & 0
\end{pmatrix}.\notag
\end{eqnarray}

The gauge transformation matrices will be written in components due to their sizes. Starting with $T^{(0)}(u,v)$:
\begin{eqnarray}
&&T_{11}^{(0)}=-\frac{u+v}{4 (u-v)},\quad T_{12}^{(0)}=0,\quad T_{13}^{(0)}=0 \notag\\
&&T_{21}^{(0)}= \frac{v (u+v) \left(-2 u^5+4 u^4+u^3 (4 v-2)-4 u^2 v-2 u v^2\right)}{4 (u-v)^3 \left(-u^2+u+v\right)^2},\quad T_{22}^{(0)}=0,\quad
T_{23}^{(0)}=0 \notag\\
&&T_{31}^{(0)}= 0,\quad T_{32}^{(0)}=0,\quad T_{33}^{(0)}=0,\notag
\end{eqnarray}
then $T^{(1)}(u,v)$:
\begin{eqnarray}
&&T_{11}^{(1)}=\frac{u+v}{4 u-4 v},\quad T_{12}^{(1)}=0,\quad T_{13}^{(1)}=0 \notag\\
&&T_{21}^{(1)}=\frac{v (u+v) \left(3 u^5-2 u^4 (v+3)+u^3 (v-2)^2+2 u^2 v (v+3)-2 v^3\right)}{4 (u-v)^3
   \left(-u^2+u+v\right)^2},\quad T_{22}^{(1)}=-\frac{(u+v)^2}{4 (u-v)^2},\notag\\ 
&&T_{23}^{(1)}=-\Big((u+v)^2 \big(-2 (u-1)^2 u^4+\left(u\left(u^3-6 u+12\right)-8\right) u v^3+\left(u \left(u^3-6 u+12\right)-8\right) u^3 v \notag\\
&& \quad-2 \left((u-2) u\left(u^2+4\right)+6\right) u^2 v^2-2 (u-1)^2 v^4\big)\Big)\frac{1}{(u-v)^2 \left(-u^2+u+v\right)^2 (u (-v)+u+v)^2}, \notag\\
&&T_{31}^{(1)}=-\frac{(u+v) \left(\left(u^2-2\right) u v-u^2-v^2\right)}{4 (u-v) \left(-u^2+u+v\right)^2},\quad T_{32}^{(1)}=0,\notag\\
&&T_{33}^{(1)}=\frac{(u-2) u(u+v)^2 \left(\left(u^2-2\right) u v-u^2-v^2\right)}{\left(-u^2+u+v\right)^2 (u (-v)+u+v)^2},\notag
\end{eqnarray}
finally $T^{(2)}(u,v)$:
\begin{eqnarray}
&&T_{11}^{(2)}=0,\quad T_{12}^{(2)}=0,\quad T_{13}^{(2)}=0 \notag\\
&&T_{21}^{(2)}=-\frac{v (u+v) \left(((u-2) u+2) u^2-(u (u+2)-4) u v+2 v^2\right)}{4 (u-v)^2 \left(-u^2+u+v\right)^2},\quad T_{22}^{(2)}=\frac{(u+v)^2}{4 (u-v)^2},\notag\\
&&T_{23}^{(2)}=\Big((u+v)^2 \big(-2 (u-1)^2 u^4+\left(u \left(u^3-6 u+12\right)-8\right) uv^3+\left(u \left(u^3-6 u+12\right)-8\right) u^3 v \notag\\
&& \quad-2 \left((u-2) u \left(u^2+4\right)+6\right) u^2 v^2-2
   (u-1)^2 v^4\big)\Big)\frac{1}{(u-v)^2 \left(-u^2+u+v\right)^2 (u (-v)+u+v)^2},\notag\\
&&T_{31}^{(2)}=\frac{(u+v) \left(\left(u^2-2\right) u v-u^2-v^2\right)}{4 (u-v) \left(-u^2+u+v\right)^2},\quad T_{32}^{(2)}=0,\notag\\
&&T_{33}^{(2)}=-\frac{(u-2)u(u+v)^2 \left(\left(u^2-2\right) u v-u^2-v^2\right)}{\left(-u^2+u+v\right)^2 (u (-v)+u+v)^2}.\notag
\end{eqnarray}

\providecommand{\href}[2]{#2}\begingroup\raggedright\endgroup

\end{document}